\documentclass[journal]{IEEEtranTCOM}
\usepackage[dvipsnames]{xcolor}
%
\normalsize

\usepackage{mathtools}
\usepackage[linesnumbered,lined,ruled,commentsnumbered,onelanguage,algo2e]{algorithm2e}
\usepackage{amsmath}
\usepackage{amssymb}
\usepackage{amsthm}
\usepackage{tabularx}
\usepackage{multirow}
\usepackage{graphicx}
\usepackage{subcaption} 
\usepackage[utf8]{inputenc}
\usepackage[export]{adjustbox}
\usepackage{wrapfig}
\usepackage{setspace}
\usepackage[Symbol]{upgreek}
\usepackage{setspace,lipsum}
\usepackage{url}
\usepackage[T1]{fontenc} 
\usepackage{cite}
\usepackage{float}
\usepackage{algpseudocode}

\usepackage{lipsum}


\newcommand{\mF}{\mathcal{F}}
\newcommand{\mA}{\mathcal{A}}

\newcommand{\mR}{\mathcal{R}}

\DeclareMathOperator{\wt}{wt}

\DeclareMathOperator*{\argmax}{arg\,max} 
\DeclareMathOperator*{\argmin}{arg\,min} 

\DeclareMathAlphabet{\mathcalbf}{OMS}{pzc}{b}{n}

\newtheorem{lemma}{Lemma}
      \newtheorem{theorem}{Theorem}

\newtheorem{remark}{Remark}

\newtheorem{definition}{Definition}
\newtheorem{example}{Example}

\hyphenation{op-tical net-works semi-conduc-tor}

\begin{document}
\newcommand{\set}[1]{\left\{{#1}\right\}}
\title{Frozen Set Design for Precoded Polar Codes}
\author{Vera~Miloslavskaya,~Yonghui~Li,~\IEEEmembership{Fellow,~IEEE},~and~Branka~Vucetic,~\IEEEmembership{Life~Fellow,~IEEE}
\thanks{Vera Miloslavskaya is with the School of Science and Technology, The University of New England, Australia (e-mail: Vera.Miloslavskaya@une.edu.au) and the School of Electrical and Computer Engineering, The University of Sydney, Australia (e-mail: vera.miloslavskaya@sydney.edu.au)}  
\thanks{Yonghui Li and Branka Vucetic are with the School of Electrical and Computer Engineering, The University of Sydney, Australia (e-mail:
yonghui.li@sydney.edu.au, branka.vucetic@sydney.edu.au).}
\thanks{This research was supported by the Australian Research Council under Grants FL160100032,
DP190101988 and DP210103410.}}


\maketitle


\begin{abstract}

This paper focuses on the frozen set design for precoded polar codes decoded by the successive cancellation list (SCL) algorithm. We propose a novel frozen set design method, whose computational complexity is low due to the use of analytical bounds and constrained frozen set structure. We derive new bounds based on the recently published complexity analysis of SCL {decoding} with near maximum-likelihood (ML) performance. To predict the ML performance, we employ the state-of-the-art bounds relying on the code weight distribution. The bounds and constrained frozen set structure are incorporated into the genetic algorithm to generate optimized frozen sets with low complexity. Our simulation results show that the constructed precoded polar codes of length $512$ have a superior frame error rate (FER) performance compared to the state-of-the-art codes under SCL decoding with various list sizes.

\end{abstract}
\begin{IEEEkeywords}
Polar codes, complexity prediction, maximum-likelihood decoding, successive cancellation list decoding, sequential decoding.
\end{IEEEkeywords}
\IEEEpeerreviewmaketitle 
\IEEEpeerreviewmaketitle
\section{Introduction}

The polar codes 
\cite{arikan2009channel} have frozen bits that are all {set to fixed values. Their} generalizations such as the CRC-aided polar codes \cite{polarNR2018}, polar subcodes \cite{trifonov2016subcodes}, parity-check-concatenated polar codes \cite{wang2016parcheckconcpolar}, polarization-adjusted convolutional (PAC) codes \cite{Arkan2019FromSD} and precoded polar codes \cite{milos2020precoded} involve frozen bits with non-fixed values, whose computation may be specified by linear combinations of information bits with lower indices. These combinations are referred to as the frozen bit expressions. 
Since polar codes with near-uniformly distributed frozen bit expressions are known to perform well \cite{trifonov2017randsubcodes,Coskun2022InfTheor}, 
we limit our consideration to such codes. Their design problem reduces to the frozen set design problem. 

We treat the frozen set design problem as an optimization problem with the objectives of minimizing the decoding error probability and 
complexity.  
For any particular decoder, the frozen set may be optimized by using the genetic algorithm \cite{Elkelesh2019}, where the code performance is evaluated via decoding simulations. However, the inherent high computational complexity of these simulations necessitates a shift towards analytical methods for code evaluation to ensure computational efficiency. 
The state-of-the-art analytical methods for the polar code evaluation are as follows. The frame error rate (FER) of polar codes under the successive cancellation (SC) decoding \cite{arikan2009channel} can be predicted using \cite[Eq. (3)]{PeSC2014}. For the maximum-likelihood (ML) decoding, there are the FER bounds \cite{Sason2006PeMLtutorial} parameterized by the weight distribution that can be computed using \cite{Canteaut1998ANA,PreTransformedSpectrum2021,milos2022, Yao2023determ}. Although there is no analytical bound predicting the FER under the SC list (SCL) decoder \cite{tal2015list}, the average list size required by SCL {decoding} to approach the ML performance can be characterized by the information-theoretical quantities \cite{Coskun2022InfTheor}. The ML performance may also be approached by the Fano decoding \cite{Fano1963}, whose complexity 
is {related to} the cutoff rate \cite{Arıkan2016OriginOfPolar}. 
We focus on the SCL decoder as the most widely used decoder for precoded polar codes.   


In this paper, 
we propose a novel low-complexity frozen set design method for precoded polar codes with various tradeoffs between the FER performance and decoding complexity.
The main contributions are as follows. First, we explore 
the SCL list size lower bound from \cite{Coskun2022InfTheor} and 
identify the factors limiting its effectiveness as the 
predictive measure for
near ML decoding complexity. Second, we improve the 
prediction accuracy by tightening the lower bound from \cite{Coskun2022InfTheor}. Third, we introduce an approximate lower bound that facilitates a fair comparison of various frozen sets. This approximation 
combines our tightened lower bound with the upper bound from \cite{Coskun2022InfTheor}. 
Fourth, we propose to solve the frozen set optimization problem 
by minimizing the ML decoding error probability estimate under the decoding complexity constraint, which is given by the proposed approximate lower bound. The resulting frozen sets are intended for precoded polar codes utilizing frozen bit expressions with near-uniformly distributed binary coefficients.
Fifth, we impose constraints on the frozen set structure to reduce 
the optimization complexity.
Our simulation results show that the constructed precoded polar codes of length $512$ have a superior FER performance compared to the state-of-the-art codes under SCL decoding with various list sizes. This confirms the efficiency of the proposed approximate lower bound 
as the ML decoding complexity measure for comparing various frozen sets. 
Given an approximate lower bound value, the frozen set optimization complexity 
is low due to the constraints on the frozen set structure and {no need to perform} 
decoding simulations. For example, the genetic algorithm requires less than a minute to solve this problem for the code length $512$. 

The paper is organized as follows. Section \ref{sPreliminaries} provides a background on the polar codes and relevant frozen set design criteria. In Section \ref{sProposed}, we derive the proposed bounds and specify the corresponding frozen set optimization process. In Section \ref{sNumerical}, we present the numerical results on the frozen set design complexity 
and the FER performance of precoded polar codes with the proposed frozen sets and compare them with the state-of-the-art.

\section{Preliminaries}
\label{sPreliminaries}

This section provides a background on the polar codes, the ML performance of precoded polar codes, and the complexity of near ML decoding using {the SCL decoder}.

\subsection{Polar Codes}

An $(N=2^n,K)$ polar code \cite{arikan2009channel} is a binary linear block code consisting of codewords\footnote{We omit the multiplication by the bit-reversal permutation matrix $B$ since $u\cdot B\cdot G^{\otimes n}=u\cdot G^{\otimes n}\cdot B$ and the proposed techniques can be easily applied to permuted polar codes as well.} $c=u\cdot G^{\otimes n}$, where   $G=\big(\begin{smallmatrix}1&0\\1&1\end{smallmatrix}\big)$, $\otimes n$ denotes the $n$-fold Kronecker product, the input vector $u$ 
has $K$ information bits $u_i$, $i\in\mA$, and $N-K$ frozen bits $u_i$, $i\in\mF=[N]\setminus\mA$, and $[N]\triangleq \{ 0,\dots, N-1\}$. Note that $[N]=\emptyset$ for $N\leq 0$. The sets $\mA$ and $\mF$ are referred to as the information and frozen sets, respectively. In the case of the original polar codes \cite{arikan2009channel}, all frozen bits have fixed values, e.g., zeros.

In a more general case, the frozen bits are equal to linear combinations of the other input bits with lower indices \cite{trifonov2013polar}, known as the \textit{\textbf{frozen bit expressions}}. The resulting polar codes are referred to as the polar codes with dynamic frozen bits, parity-check concatenated polar codes, precoded polar codes and pre-transformed polar codes in the literature. We use the term ``precoded polar codes'' as in our previous works \cite{milos2020precoded,milos2021recursive}.

\subsection{Weight Distribution of Precoded Polar Codes and Their ML Performance}
\label{sPeML}

The precoded polar codes are linear codes and, therefore, {their performance under ML decoding depends on} their weight distributions.
However, the complexity of computing the exact weight distribution is high, except for very short codes and well-structured codes. In this paper, we employ the {ensemble-averaged} weight distribution \cite{PreTransformedSpectrum2021}. 
 {Specifically, the weight distribution is averaged over the ensemble of precoded polar codes with a given frozen set and all possible frozen bit expressions. 
Given the code parameters $(N,K)$ and information set $\mA$, the average number $\overline{W}_{t}$ of codewords with weight $t$ is computed as
\begin{align*}
    &\overline{W}_{t}=\sum_{\substack{i\in\mA \\ 2^{\wt(i)}\leq t}} 2^{K-|\mA\cap \{ 0,\dots,i\}|} P(N,i,t), \label{eq:W} \\
 &P(N,i,t)=\nonumber\\
 &\quad\quad\begin{cases}
    \displaystyle\sum_{\substack{t'=2^{\wt(i)} \\ t-t'\mathrm{ is\, even}}}^{\min(t, N/2)} P(N/2,i,t')\frac{\binom{N/2-t'}{(t-t')/2}}{2^{N/2-t'}},\quad i\in [N/2],
    \nonumber \\ 
    P(N/2,i-N/2,t/2),\quad i\in [N]\setminus [N/2], t \mathrm{\,is\, even}, \\ 0,\quad i\in [N]\setminus [N/2], t \mathrm{\,is\, odd},\end{cases}
\end{align*}
where $\wt(i)$ is the Hamming weight of the binary expansion of integer $i$.
The boundary conditions: $P(2,0,1)=P(2,1,2)=1$, and $P(N,i,t)=0$ when ($i=0$ and $t$ is even) or ($i>0$ and $t$ is odd). We recommend a log-domain implementation for accuracy. 
\begin{table*}
\centering{
\caption{{The weight distribution of randomly precoded polar codes of length $128$ 
} 
}
\label{tab:randPrecodedPolar}
{
\centering{
\scriptsize
\begin{tabular}{|r|r|r|r||r|r|r|r||r|r|r|r|} \hline
\multicolumn{4}{|c||}{\multirow{2}{*}{$(128,48)$ frozen set}}& \multicolumn{4}{c||}{\multirow{2}{*}{$(128,64)$ frozen set}}&    \multicolumn{4}{c|}{\multirow{2}{*}{$(128,80)$ frozen set}}  \\
\multicolumn{4}{|c||}{}& \multicolumn{4}{c||}{}&  \multicolumn{4}{c|}{}
\\ \hline
\multirow{2}{*}{$t$} &     Ensemble- & Code (a) &   Code (b) &
\multirow{2}{*}{$t$} &     Ensemble- & Code (a) &   Code (b) &
\multirow{2}{*}{$t$} &     Ensemble- & Code (a) &   Code (b) \\
& averaged $\overline{W}_{t}$ & exact $W_{t}$ & exact $W_{t}$ &
& averaged $\overline{W}_{t}$ & exact $W_{t}$ & exact $W_{t}$ &
& averaged $\overline{W}_{t}$ & exact $W_{t}$ & exact $W_{t}$ \\ \hline
16&1864&1848&1880& 8&272&264&248& 8&4308&4328&4320   \\ \hline
20&17050&17184&16992& 12&896&928&992& 10&2016&2624&2304   \\ \hline
22&405&384&128& 16&85423&74984&79688& 12&363408&372576&363584 \\ \hline
24&306960&307136&310720& 18&6104& 5760&5760& 14&1077792&1212352&1120512\\ \hline
26&40132&40320&25728  & & & & & & & & \\ \hline
28&3399934&3409888&3501728& & & & & & & & \\ \hline 
30&1725681&1725696&1349376& & & & & & & & \\ \hline 
\end{tabular} 
}}}
\end{table*}

The complexity of computing $P(N,i,t)$ scales as $O(N)$, assuming that all $P(N/2,\cdot,\cdot)$ are available. This leads to the complexity $O(N^3)$ for computing $P(N,i,t)$ over all possible $i$ and $t$. Thus, the worst-case complexity of computing the ensemble-averaged weight distribution scales as $O(N^3)$, as explained in \cite[Section III-C]{PreTransformedSpectrum2021}. Table \ref{tab:randPrecodedPolar} compares the exact partial weight distribution of several precoded polar codes \cite{milos2022}\footnote{{We consider three frozen sets that are given by the less reliable bit-channels according to the  Gaussian approximation \cite{trifonov2012efficient} for AWGN, BPSK, $E_b/N_0=4$. For each frozen set, we produce two precoded polar codes by randomly generating frozen bit expressions. 
}} with the ensemble-averaged weight distribution \cite{PreTransformedSpectrum2021} to illustrate the accuracy of the latter one. 
} 
To estimate the ML decoding error probability, we substitute the {ensemble-}averaged weight distribution into the union bound \cite{Sason2006PeMLtutorial}, known for its simplicity, and the tangential-sphere bound (TSB) \cite{poltyrev1994bounds}, known for its tightness. 

\subsection{Complexity of SCL Decoding with Near ML Performance}
\label{sOrigDm}


It has been shown in \cite{tal2015list} that the time complexity of SCL is $O(L N \log(N))$ and its space complexity is $O(L N)$, where $L$ is the decoding list size. 
The FER performance of SCL decoding was experimentally shown to improve with increasing $L$ at the expense of increasing complexity.
Recently, \cite{Coskun2022InfTheor} provided ground-breaking results on the list size $L$ such that SCL has a near ML performance.
For general binary memoryless symmetric (BMS) channels, \cite[Theorem 1]{Coskun2022InfTheor} 
proved that the mean value of the binary logarithm of $L$ required at the $m$-th stage of SCL to achieve the ML performance is upper bounded by the conditional entropy $\bar D_m$
\begin{equation}
    \bar D_m \triangleq H(U_{\mA^{(m)}}|Y_{[N]}, U_{\mF^{(m)}}),
    \label{eq:barDm}
\end{equation}
where $m\in [N]$, $\mA^{(m)}\triangleq \{i\in \mA \,|\, i\leq m\}$, $\mF^{(m)}\triangleq \{i\in \mF \,|\, i\leq m\}$, $U_{T}\triangleq \{ U_i \, | \, i\in T\}$ for any set $T$, $U_i$ is the random variable corresponding to the $i$-th input bit, and $Y_i$ is the random variable corresponding to the $i$-th output. 
Note that we use the notation of \cite{Coskun2022InfTheor} except for starting enumeration from zero instead of one.
Unfortunately, the computation of $\bar D_m$ requires performing decoding with a huge/unbounded list size as pointed out in \cite[Remark 2]{Coskun2022InfTheor}. To overcome this issue, \cite[Remark 2]{Coskun2022InfTheor} suggested to characterize the decoding list size using the lower bound on $\bar D_m$ that is derived in \cite[Section III-A]{Coskun2022InfTheor}. 
This lower bound is defined as
$
\bar D_m \geq\sum_{i\in\mA^{(m)}} H_{n,i} - \sum_{i\in\mF^{(m)}} (1-H_{n,i})$ by \cite[Eq. (6a)]{Coskun2022InfTheor}, where $H_{n,i}$ is the entropy of the $i$-th bit-channel, $i\in [2^n]$. 
However, it follows from the numerical results \cite[Fig. 1]{Coskun2022InfTheor} that the actual lower bound on $\bar D_m$, denoted by us as $\bar D^{\mathrm{low}}_{m}$, takes into account the non-negativity of entropy in Eq. \eqref{eq:barDm} as 
\begin{equation}
\bar D^{\mathrm{low}}_{m}=\begin{cases} \bar D^{\mathrm{low}}_{m-1}+H_{n,m}, & m\in\mA, \\ \max(\bar D^{\mathrm{low}}_{m-1}-(1-H_{n,m}), 0), & m\in\mF,\end{cases}
    \label{eq:lower}
\end{equation}
where $m\in [2^n]$, and $\bar D^{\mathrm{low}}_{-1}=0$.
Note that $H_{n,i}$ can be represented as $1-I_{n,i}$, where $I_{n,i}$ is the mutual information of the $i$-th bit-channel that can be recursively {estimated} using {the analytical approximation} \cite[Eqs. (9), (10) and (26)]{Brannstrom2005J} for the AWGN channel with BPSK modulation. {The application of this approximation to polar codes can be found in \cite[Eqs. (4.1), (4.2), (4.6) and (4.7)]{dosio2016MutInf}.} 

\subsection{Frozen Bit Expressions}
\label{sFrBitExpressions}
It has been shown that codes with randomly generated frozen bit expressions can perform well \cite{trifonov2017randsubcodes,trifonov2020randsubcodes,Coskun2022InfTheor,PreTransformedSpectrum2021}. However, the random generation limits the reproducibility of the results. Following \cite{milos2024Adapt}, we ensure the reproducibility by using the deterministic binary sequence $\omega$ produced from the rational approximation of the $\pi$ number: $\pi \approx\frac{104348}{33215}$. Thus, $\omega=(\omega_0,\omega_1,\omega_2,\omega_3,\dots)$ is equal to the binary expansion of $\frac{104348}{33215}$ that can be easily computed. 
Given $\omega$ and the information bits $u_i$, $i\in\mA$, we calculate the values of the frozen bits $u_i$, $i\in\mF$, as follows:
\begin{algorithmic}
\small
\setstretch{0.8}
\State $b\leftarrow 0$
\State \textbf{for} {$i\in\mF$} \textbf{do} 
    \State \quad\quad\quad {$u_i\leftarrow 0$}
    \State \quad\quad\quad \textbf{for} $j\in\mA$, $j<i$ \textbf{do} 
        \State \quad\quad\quad \quad\quad  $u_i\leftarrow u_i+\omega_b\cdot u_{j}$
        \State \quad\quad\quad \quad\quad $b \leftarrow b+1$
\end{algorithmic}

\section{Proposed Frozen Set Design for Precoded Polar Codes}
\label{sProposed}

This section presents our low-complexity frozen set design method for precoded polar codes with various tradeoffs between the FER performance and decoding complexity. 
We focus on the problem of the complexity prediction for SCL with a near ML performance,  
since this problem has been partially solved by 
$\bar D^{\mathrm{low}}_{m}$ from Eq. \eqref{eq:lower}. 

This section is organized as follows. We first consider limitations of $\bar D^{\mathrm{low}}_{m}$ as a decoding complexity measure in Section \ref{sCoskunLim} and identify their source in Section \ref{sCoskunDeriv}. 
To resolve the identified issues, we derive a new tightened lower bound $\bar D^{\mathrm{tight}}_{m}$ in Section \ref{sTightened} and alleviate the influence of the frozen set structure by combining the tightened lower bound with an upper bound in Section \ref{sApproximate}.    
The resulting approximate bound 
$\bar D^{\mathrm{apx}}_{m}$ is further used as a decoding complexity measure during the 
frozen set optimization in Section \ref{sOptimization}. The optimization complexity is significantly reduced by imposing constraints on the frozen set structure.
Note that the proposed frozen design approach is intended for precoded polar codes with near-uniform frozen bit expressions since both the performance and complexity criteria have been derived for such codes. 



\subsection{Limitations of $\bar D^{\mathrm{low}}_{m}$ as a Decoding Complexity Measure}
\label{sCoskunLim}

The necessity to have a low $\bar D^{\mathrm{low}}_{m}$ for a precoded polar code to approach the ML performance under SCL with a low complexity has been proven in \cite{Coskun2022InfTheor}  for BMS channels. 
Besides, \cite[Appendix]{Coskun2022InfTheor} specified three exemplary frozen sets for the code parameters $(512,256)$ and \cite[Fig. 4]{Coskun2022InfTheor} illustrated their remarkable performance. 
However, the following example shows the limited applicability of $\bar D^{\mathrm{low}}_{m}$ for the frozen set comparison\footnote{Besides, there are two inherent weaknesses of the entropy-based analysis of the SCL decoder that are described in \cite[Remark 1]{Coskun2022InfTheor}.}.  
For the code parameters $(512,256)$, the frozen set consisting of $256$ less reliable bit-channels is characterized by $\max_{m} \bar D^{\mathrm{low}}_{m}=0.953$, where the bit-channel reliabilities are calculated by the Gaussian approximation \cite{trifonov2012efficient} for AWGN, BPSK, and $E_b/N_0=2$ dB. {According to our experimental results for the corresponding precoded polar code under SCL decoding with list size $L$ at $E_b/N_0=0.5$ dB, the mean value of the binary logarithm of $L$ required to achieve the ML performance is $1.7$. The same $\max_{m} \bar D^{\mathrm{low}}_{m}=0.953$ is provided by another $(512,256)$ frozen set for which the experimentally obtained mean value of the binary logarithm of $L$ is about $4.5$.} 
The existence of $(N,K)$ precoded polar codes with similar $\max_m \bar D^{\mathrm{low}}_{m}$ but different complexities of near ML decoding hinders the usage of $\bar D^{\mathrm{low}}_{m}$ as the decoding complexity measure during the frozen set optimization for SCL.

\subsection{Derivation of $\bar D^{\mathrm{low}}_{m}$ in \cite{Coskun2022InfTheor}}
\label{sCoskunDeriv}

The source of the issues with $\bar D^{\mathrm{low}}_{m}$  follows from its derivation in \cite[Section III-A]{Coskun2022InfTheor}. Specifically, the lower bound $\bar D^{\mathrm{low}}_{m}$ on $\bar D_m$ is obtained for BMS channels by introducing $\Delta_m\triangleq \bar D_m-\bar D_{m-1}$ and showing that $\Delta_m=H(U_m|Y_{[N]}, U_{[m]})$ when $m\in\mA$ and $\Delta_m=H(U_m|Y_{[N]}, U_{[m]})-H(U_m|Y_{[N]}, U_{\mF^{(m-1)}})\geq H(U_m|Y_{[N]}, U_{[m]})-1$ when $m\in\mF$. Thus, the gap between $\bar D_m$ and its lower bound $\bar D^{\mathrm{low}}_{m}$ is due to 
replacing $H(U_m|Y_{[N]}, U_{\mF^{(m-1)}})$ by its upper bound 1 when $m\in\mF$. At the same time, $H(U_m|Y_{[N]}, U_{\mF^{(m-1)}})$ is lower bounded by $H(U_m|Y_{[N]}, U_{[m]})$, which means that $\Delta_m\leq 0$ when $m\in\mF$ and leads to the upper bound $\bar D_m\leq \sum_{{i}\in\mA^{(m)}} H_{n,i}$ \cite[Eq. (6b)]{Coskun2022InfTheor}. \cite[Remark 2]{Coskun2022InfTheor} explains the preferability of the lower bound on $\bar D_m$ compared to the upper bound by the fact that the upper bound ignores the effect of the frozen bits. 

\subsection{Proposed Tightened Lower Bound $\bar D^{\mathrm{tight}}_{m}$} 
\label{sTightened}

We propose to tighten the lower bound on $\bar D_m$ by tightening the upper bound on $H(U_m|Y_{[N]}, U_{\mF^{(m-1)}})$. Observe that $H(U_m|Y_{[N]}, U_{\mF^{(m-1)}})$ is upper bounded by $H(U_m|Y_{T}, U_{\Phi})$ for any subsets $\Phi\subseteq \mF^{(m-1)}$ and $T\subseteq [N]$. In what follows below we show how to identify non-trivial sets $\Phi$ and $T$ such that $H(U_m|Y_{T}, U_{\Phi})$ can be easily computed. The following example illustrates the case of $N=4$. 

\begin{example}
\label{exam:N4}
    For $n=2$ and $N=2^n=4$, the $N\times N$ polar transformation\footnote{
The bit reversal permutation matrix $B$ can be easily incorporated by permuting elements of $Y_{[N]}$, i.e., by replacing $Y_{[N]}$ with $Y_{[N]} B$. 
} is specified by $G^{\otimes n}=\left(\begin{smallmatrix} 1&0&0&0 \\ 1&1&0&0 \\ 1&0&1&0 \\ 1&1&1&1 \end{smallmatrix}\right)$. 
Let us consider various cases of $\mF^{(m-1)}$ and calculate the corresponding upper bounds on $h_{m,\mF}\triangleq H(U_m|Y_{[N]},U_{\mF^{(m-1)}})$ 
\newline
    Case $m=0\colon$ 
    \begin{itemize}
    \item $\mF^{(m-1)}=\emptyset$ and then $h_{0,\mF}=H(U_0|Y_{[N]})=H_{n,0}$ by the definition of $H_{n,m}$.
    \end{itemize}
    Case $m=1\colon$
        \begin{itemize}
        \item if $\mF^{(m-1)}=\{ 0 \}$, then $h_{1,\mF}=H(U_1|Y_{[N]},U_0)=H_{n,1}$ by the definition of $H_{n,m}$. 
        \item if $\mF^{(m-1)}=\emptyset$, then $h_{1,\mF}=H(U_1|Y_{[N]})\leq 
        H(U_1|Y_1,Y_3)=H_{n-1,0}$ since the received vector $(Y_1,Y_3)$ corresponds to the transmitted $(U_1,U_3)G$.
        \end{itemize}
    Case $m=2\colon$
        \begin{itemize}
        \item if $\mF^{(m-1)}=\{ 0,1 \}$, then $h_{2,\mF}=H(U_2|Y_{[N]},U_0,U_1)=H_{n,2}$ by the definition of $H_{n,m}$.
        \item if $\mF^{(m-1)}\in\{\emptyset, \{ 0\}, \{ 1\} \}$, then $h_{2,\mF}=H(U_2|Y_{[N]}, U_{\mF^{(m-1)}})\leq H(U_2|Y_2,Y_3)=H_{n-1,0}$ since the received vector $(Y_2,Y_3)$ corresponds to the transmitted $(U_2,U_3)G$.
        \end{itemize}
    Case $m=3\colon$
        \begin{itemize}
        \item if $\mF^{(m-1)}=\{ 0,1,2 \}$, then $h_{3,\mF}=H(U_3|Y_{[N]},U_0,U_1,U_2)=H_{n,3}$ by the definition of $H_{n,m}$.
        \item if $\mF^{(m-1)}\in\{\emptyset, \{ 0\}\}$, then $h_{3,\mF}=H(U_3|Y_{[N]},U_{\mF^{(m-1)}})\leq H(U_3|Y_3)=H_{n-2,0}$ since the received $Y_3$ corresponds to the transmitted $U_3$.
        \item if $\mF^{(m-1)}\in\{ \{ 2\}, \{ 0,2\}, \{ 1,2\} \}$, then $h_{3,\mF}=H(U_3|Y_{[N]}, U_{\mF^{(m-1)}})\leq  H(U_3|Y_2,Y_3,U_2)=H_{n-1,1}$ since the received vector $(Y_2,Y_3)$ corresponds to the transmitted $(U_2,U_3)G$.
        \item if $\mF^{(m-1)}\in\{ \{ 1\}, \{ 0,1\} \}$, then $h_{3,\mF}=H(U_3|Y_{[N]}, U_{\mF^{(m-1)}})\leq H(U_3|Y_1,Y_3, U_1)=H_{n-1,1}$ since the received vector $(Y_1,Y_3)$ corresponds to the transmitted $(U_1,U_3)G$. 
        \end{itemize}

\end{example}

Example \ref{exam:N4} specifies the upper bounds on $H(U_m|Y_{[N]},U_{\mF^{(m-1)}})$ for $N=4$. The following lemma defines 
the upper bound on $H(U_m|Y_{[N]},U_{\mF^{(m-1)}})$ for 
any given $N=2^n$, $m$ and $\mF^{(m-1)}$. 
Let $T_{I,J}$ be a submatrix of $T$ consisting of the elements $T_{i,j}$, $i\in I$, $j\in J$. 

\begin{lemma}
\label{lem:H}
Let sets $I,J\subseteq [2^n]$ and integer $\widetilde n \leq n$ satisfy the following conditions:
\begin{enumerate}
\item $|I|=|J|=2^{\widetilde n}$, 
\item $(G^{\otimes n})_{I,J}=G^{\otimes \widetilde n}$, \item $(G^{\otimes n})_{\overline I,J}=\mathbf{0}$, 
\item $m\in I$, 
\item $I\cap [m] \subseteq \mF^{(m-1)}$. 
\end{enumerate}
Then 
 \begin{equation}
    H(U_m|Y_{[2^n]},U_{\mF^{(m-1)}}) \leq H_{\widetilde n, \widetilde  m},
    \label{eq:lemH}
\end{equation}
where $\widetilde m \triangleq | I\cap [m] |$, $\overline I \triangleq [2^n]\setminus I$, and $\mathbf{0}$ is all-zero matrix/vector.
\end{lemma}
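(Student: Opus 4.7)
The plan is to establish the bound by first monotonically relaxing the conditioning in $H(U_m\mid Y_{[2^n]},U_{\mF^{(m-1)}})$ to a smaller ``natural'' set, and then identifying the resulting conditional entropy with the standard definition of $H_{\widetilde n,\widetilde m}$.

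First, I would invoke ``conditioning reduces entropy'' twice. By condition~5, $I\cap[m]\subseteq \mF^{(m-1)}$, and trivially $Y_J\subseteq Y_{[2^n]}$, so
\[
H(U_m\mid Y_{[2^n]},U_{\mF^{(m-1)}})\leq H(U_m\mid Y_J,U_{\mF^{(m-1)}})\leq H(U_m\mid Y_J,U_{I\cap[m]}).
\]
The first inequality drops the observations $Y_{[2^n]\setminus J}$ from the conditioning; the second drops the entries $U_{\mF^{(m-1)}\setminus I}$. Both steps can only increase the conditional entropy.

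Next, I would argue that the joint law of $(U_I,Y_J)$ coincides with the input-output law of a polar code of length $2^{\widetilde n}$. Writing $C=UG^{\otimes n}$ and restricting columns to $J$ gives
\[
C_J = U_I(G^{\otimes n})_{I,J} + U_{\overline I}(G^{\otimes n})_{\overline I,J} = U_I G^{\otimes \widetilde n},
\]
by conditions~2 and~3. Memorylessness of the BMS channel then implies that $Y_J$ is the channel output for $C_J$ and is conditionally independent of $(U_{\overline I},Y_{[2^n]\setminus J})$ given $C_J$. Since $U$ is uniform on $\{0,1\}^{2^n}$, $U_I$ is uniform on $\{0,1\}^{2^{\widetilde n}}$; hence the marginal $(U_I,Y_J)$ matches the setup used to define bit-channels of an $\widetilde n$-level polar transformation.

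Finally, I would identify the reduced conditional entropy with $H_{\widetilde n,\widetilde m}$. Ordering $I=\{i_0<i_1<\cdots<i_{2^{\widetilde n}-1}\}$, conditions~1 and~4 together with $\widetilde m=|I\cap[m]|$ force $m=i_{\widetilde m}$, so $U_{I\cap[m]}=(U_{i_0},\dots,U_{i_{\widetilde m-1}})$ are precisely the first $\widetilde m$ inputs of the sub-transformation. By the definition of the bit-channel entropy, $H(U_m\mid Y_J,U_{I\cap[m]})=H_{\widetilde n,\widetilde m}$, which combined with the chain above yields~\eqref{eq:lemH}. The main obstacle I expect is the bookkeeping in the middle step: one must verify that the row/column orderings implicit in the submatrix notation $(G^{\otimes n})_{I,J}$ agree with the canonical input/output ordering of a polar transformation, so that the identification in the last step legitimately uses the index $\widetilde m=|I\cap[m]|$. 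Once this is in place, the argument reduces to the standard memoryless-channel factorization and the chain-rule definition of $H_{\widetilde n,\widetilde m}$.
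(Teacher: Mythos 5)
Your proposal is correct and follows essentially the same route as the paper's proof: drop conditioning down to $H(U_m\mid Y_J,U_{I\cap[m]})$ using condition 5, observe via conditions 1--3 that $Y_J$ is the memoryless-channel output of $U_I G^{\otimes\widetilde n}$, and identify $U_m$ as the $\widetilde m$-th input of that sub-transformation (condition 4) so the quantity equals $H_{\widetilde n,\widetilde m}$. The ordering bookkeeping you flag is handled in the paper simply by the relabeling $\widetilde U_{[2^{\widetilde n}]}\triangleq U_I$, $\widetilde Y_{[2^{\widetilde n}]}\triangleq Y_J$, exactly as you describe.
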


\begin{proof}
For any such $I$ and $J$, we have $
H(U_m|Y_{[2^n]},U_{\mF^{(m-1)}})\leq H(U_m|Y_J,U_{I\cap [m]})$ due to $J\subseteq [2^n]$ and condition 5: $I\cap [m]\subseteq \mF^{(m-1)}$.  
By substituting the random variable vectors ${\widetilde U}_{[2^{\widetilde n}]}\triangleq U_{I}$ and ${\widetilde Y}_{[2^{\widetilde n}]}\triangleq Y_J$, we obtain $H(U_m|Y_J,U_{I\cap [m]})\overset{(a)}{=} H({\widetilde U}_{\widetilde m}|{\widetilde Y}_{[2^{\widetilde n}]}, {\widetilde U}_{[\widetilde m]}) \overset{(b)}{=} H_{\widetilde n, \widetilde m}$. Equality $(a)$ holds since ${\widetilde U}_{\widetilde m}=U_m$ and ${\widetilde U}_{[\widetilde m]}=U_{I\cap [m]}$ due to the definition of $\widetilde m$ 
and condition 4: $m\in I$. Equality $(b)$ holds since the received $Y_J={\widetilde Y}_{[2^{\widetilde n}]}$ corresponds to the transmitted $U (G^{\otimes n})_{[2^n],J}= \underbrace{U_I}_{{\widetilde U}_{[2^{\widetilde n}]}} \underbrace{(G^{\otimes n})_{I,J}}_{ G^{\otimes \widetilde n}}\oplus U_{\overline I} \underbrace{(G^{\otimes n})_{\overline I,J}}_{ \mathbf{0}}= {\widetilde U}_{[2^{\widetilde n}]}G^{\otimes \widetilde n}$ due to conditions 1--3. This concludes the proof. 
\end{proof}

The upper bound of Lemma \ref{lem:H} is non-constructive since it does not specify how to find the sets $I$ and $J$. The following two lemmas define sets $I$ and $J$ satisfying conditions 2--3 of Lemma \ref{lem:H}:  $(G^{\otimes n})_{I,J}=G^{\otimes \widetilde n}$ and $(G^{\otimes n})_{\overline I,J}=\mathbf{0}$. Lemma \ref{lem:setsIJ} considers the case of $|I|=|J|=2^{n-1}$, and then Lemma \ref{lem:recSetsIJ} generalizes the result for $|I|=|J|=2^{\widetilde n}$, $\widetilde n\leq n$. Note that we employ the binary representation $(j_0,\dots,j_{n-1})\in\{0,1\}^n$ of the integers $j=\sum_{t=0}^{n-1} j_t 2^t\in [2^n]$.

\begin{lemma}
\label{lem:setsIJ}
    Given any integer $q\in [n]$ and the corresponding set 
    $$S(q)\triangleq \Big\{ 
    j\in [2^n] \;|\; j_q=1\Big\},$$
    where $j_q$ is the $q$-th bit in the binary expansion of the integer $j$. Then sets $I=J=S(q)$ satisfy the conditions $(G^{\otimes n})_{I,J}=G^{\otimes (n-1)}$ and $(G^{\otimes n})_{\overline I,J}=\mathbf{0}$.
\end{lemma}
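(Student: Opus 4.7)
My plan is to use the entry-wise closed form for Kronecker powers of $G$. Since $G=\bigl(\begin{smallmatrix}1&0\\1&1\end{smallmatrix}\bigr)$ satisfies $G_{a,b}=1$ iff $b\leq a$ for $a,b\in\{0,1\}$, the definition of the Kronecker product yields
\[
(G^{\otimes n})_{i,j}=\prod_{t=0}^{n-1}G_{i_t,j_t},
\]
which equals $1$ exactly when $j$ is a bitwise submask of $i$ (i.e., $j_t\leq i_t$ for all $t$) and equals $0$ otherwise. Both claims of the lemma then reduce to isolating the factor at position $t=q$.

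The vanishing condition $(G^{\otimes n})_{\overline I,J}=\mathbf{0}$ is immediate: any $i\in\overline I$ has $i_q=0$ and any $j\in J$ has $j_q=1$, so the factor $G_{i_q,j_q}=G_{0,1}=0$ annihilates the whole product. For the identity $(G^{\otimes n})_{I,J}=G^{\otimes(n-1)}$, when $i,j\in S(q)$ the factor at position $q$ equals $G_{1,1}=1$, and the surviving product $\prod_{t\neq q}G_{i_t,j_t}$ is exactly the entry of $G^{\otimes(n-1)}$ determined by the remaining $n-1$ bits of $i$ and $j$. What still needs verification is that extracting the rows and columns of $S(q)$ in their natural (sorted) order reproduces $G^{\otimes(n-1)}$ in its standard indexing, i.e., that the bijection $\phi\colon S(q)\to[2^{n-1}]$ that deletes bit $q$ and shifts the bits at positions $>q$ down by one is order-preserving.

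The main (mild) obstacle is this order-preservation check. Given $j,\widetilde j\in S(q)$ with $j<\widetilde j$, let $r$ be the most significant position at which they differ; necessarily $r\neq q$ since $j_q=\widetilde j_q=1$, and the strict inequality forces $\widetilde j_r>j_r$. Under $\phi$ this distinguished bit moves to position $r$ (if $r<q$) or to position $r-1$ (if $r>q$), and in either case it remains the most significant position at which $\phi(j)$ and $\phi(\widetilde j)$ differ; hence $\phi(j)<\phi(\widetilde j)$. With this bookkeeping, the entrywise equality $(G^{\otimes n})_{I,J}=G^{\otimes(n-1)}$ follows, completing the proof of both conditions.
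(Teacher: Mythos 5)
Your proof is correct, and it reaches the same structural conclusion by what is essentially the equivalent observation packaged differently. The paper invokes the monomial-evaluation formalism of Bardet et al.: each column $j$ of $G^{\otimes n}$ is the evaluation of $x_0^{j_0}\cdots x_{n-1}^{j_{n-1}}$, so columns indexed by $S(q)$ contain the factor $x_q$, which vanishes on rows with $x_q=0$ (giving the zero block) and can be divided out on rows with $x_q=1$ (giving $G^{\otimes(n-1)}$). You instead use the entrywise Kronecker formula $(G^{\otimes n})_{i,j}=\prod_t G_{i_t,j_t}$, i.e., the ``$j$ is a bitwise submask of $i$'' characterization, and isolate the factor at position $q$; since $\prod_t x_t^{j_t}$ evaluated at $x=i$ is exactly this product, the two arguments are mathematically the same, but yours is self-contained (no external algebraic formalism needed). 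One genuine point in your favour: you explicitly verify that the bijection $\phi$ deleting bit $q$ is order-preserving on $S(q)$, so that the submatrix taken with rows and columns of $S(q)$ in increasing order really coincides with $G^{\otimes(n-1)}$ in its standard indexing; the paper's proof re-indexes by $\widehat{j},\widehat{x}\in[2^{n-1}]$ without spelling this out. What the paper's phrasing buys is consistency with the polynomial viewpoint it cites and reuses; what yours buys is an elementary derivation plus the explicit indexing bookkeeping. No gaps.
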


\begin{proof}
As shown in \cite{Bardet2016}, the $j= \sum_{t=0}^{n-1} j_t 2^t$-th row of $\left(\begin{smallmatrix} 1&1\\ 0&1 \end{smallmatrix}\right)^{\otimes n}$ can be represented as the evaluation of polynomial $f_n(j,x)\triangleq x_0^{j_0}x_1^{j_1}\cdots x_{n-1}^{j_{n-1}}$ over $2^n$ elements $x\triangleq \sum_{t=0}^{n-1} x_t 2^t\in [2^n]$. Since the $j$-th column of $G^{\otimes n}$ is equal to the transposed $j$-th row of $\left(\begin{smallmatrix} 1&1\\ 0&1 \end{smallmatrix}\right)^{\otimes n}$, it has the same polynomial representation. Thus, columns of $G^{\otimes n}$ with the indices $j\in J=S(q)$ correspond to polynomials $x_0^{j_0}\cdots 
x_q^{j_q=1} 
\cdots x_{n-1}^{j_{n-1}}$. For all $x\in \overline I=[2^n]\setminus S(q)$, the multiplier $x_q=0$ due to the definition of $S(q)$. Consequently, we have $f_n(j,x)=0$ for all $x\in \overline I$, $j\in J$. 
Therefore, the condition $(G^{\otimes n})_{\overline I,J}=\mathbf{0}$ is satisfied. For all $x\in I=S(q)$, the multiplier $x_q=1$ and consequently  $f_n(j,x)/x_q=x_0^{j_0}\dots x_{q-1}^{j_{q-1}} x_{q+1}^{j_{q+1}}\dots x_{n-1}^{j_{n-1}}=f_{n-1}({\widehat{j}},{\widehat  x})$, where the integers $\widehat 
 j$ and $\widehat x$ are defined by their binary expansions $(j_0,\dots,j_{q-1},j_{q+1},\dots,j_{n-1})$ and $(x_0,\dots,x_{q-1},x_{q+1},\dots,x_{n-1})$, respectively. The evaluations of polynomials $f_{n-1}({\widehat j},{\widehat x})$ over elements $\widehat x\in [2^{n-1}]$ for $\widehat j\in [2^{n-1}]$ give the matrix $G^{\otimes (n-1)}$. Therefore, the condition $(G^{\otimes n})_{I,J}=G^{\otimes (n-1)}$ is satisfied. 
\end{proof}

\begin{lemma}
\label{lem:recSetsIJ}
    Given any set $Q\subset [n]$ and the corresponding 
    \begin{equation}
    S(Q)\triangleq \Big\{ 
    j\in [2^n] \;|\; j_Q=\mathbf{1}\Big\},
    \label{eq:SQ}
    \end{equation}
    where $\mathbf{1}\triangleq(1,\dots,1)$, and $j_Q=\mathbf{1}$ means that $j_q=1$ for all $q\in Q$. 
    Then sets $I=J=S(Q)$ satisfy the conditions $(G^{\otimes n})_{I,J}=G^{\otimes (n-|Q|)}$ and $(G^{\otimes n})_{\overline I,J}=\mathbf{0}$.   
\end{lemma}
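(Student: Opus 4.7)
The plan is to extend the polynomial argument used in the proof of Lemma \ref{lem:setsIJ} from a single coordinate to an arbitrary subset of coordinates. Concretely, I would reuse the established fact that entry $(x,j)$ of $G^{\otimes n}$ is given by $f_n(j,x)=\prod_{t=0}^{n-1}x_t^{j_t}$, where $(x_0,\dots,x_{n-1})$ and $(j_0,\dots,j_{n-1})$ are the binary expansions of $x$ and $j$. I then set $I=J=S(Q)$ and verify the two required equalities by evaluating this polynomial on the appropriate row/column blocks.

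For the zero block: fix any $x\in\overline I$. By definition of $S(Q)$ in Eq.\ \eqref{eq:SQ}, there exists some $q\in Q$ with $x_q=0$, while every $j\in J=S(Q)$ has $j_q=1$. Hence the factor $x_q^{j_q}=0^1=0$ annihilates the product and $f_n(j,x)=0$ for all such $x$ and $j$, giving $(G^{\otimes n})_{\overline I,J}=\mathbf{0}$. For the reduced Kronecker block: fix $x\in I=S(Q)$ and $j\in J=S(Q)$. For every $q\in Q$ both $x_q=1$ and $j_q=1$, so those factors contribute $1$ and $f_n(j,x)=\prod_{t\in[n]\setminus Q}x_t^{j_t}$. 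Deleting the $|Q|$ fixed coordinates induces a bijection between $S(Q)$ and $[2^{n-|Q|}]$; under this bijection the restricted polynomial matches $f_{n-|Q|}(\widehat j,\widehat x)$, so letting $\widehat j$ and $\widehat x$ range over $[2^{n-|Q|}]$ reconstructs $G^{\otimes(n-|Q|)}$ exactly, proving $(G^{\otimes n})_{I,J}=G^{\otimes(n-|Q|)}$.

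An equivalent route I would mention as a sanity check is induction on $|Q|$: the base case $|Q|=1$ is Lemma \ref{lem:setsIJ}, and for the inductive step one picks any $q\in Q$, applies Lemma \ref{lem:setsIJ} to carve out $G^{\otimes(n-1)}$ indexed by $S(\{q\})$, observes that within this submatrix $S(Q)$ corresponds (after dropping the $q$-th bit) to $S(Q\setminus\{q\})$ in $[2^{n-1}]$, and invokes the inductive hypothesis. The main obstacle in either approach is not a hard computation but a bookkeeping one: I must carefully track the coordinate relabeling when the fixed bits are dropped, and make sure that rows excluded by $S(Q)$ but included in some larger $S(\{q\})$ still produce zero columns (they do, because such rows must fail at least one other coordinate in $Q$). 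Because the polynomial evaluation argument handles all of this uniformly in one line, I would favor it over the inductive version in the final write-up.
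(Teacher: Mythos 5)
Your proof is correct, but it takes a different route from the paper. The paper proves this lemma by induction on $|Q|$, treating Lemma \ref{lem:setsIJ} as a black box: it adds one element $q$ to $Q$ at a time, applies Lemma \ref{lem:setsIJ} inside the submatrix $G^{\otimes(n-|Q|)}$ furnished by the induction hypothesis, and then verifies the zero block by splitting $[2^n]\setminus S(Q\cup\{q\})$ into $([2^n]\setminus S(Q))\cup\bigl(S(Q)\cap([2^n]\setminus S(q))\bigr)$ and handling the two pieces separately. You instead generalize the polynomial argument of Lemma \ref{lem:setsIJ} directly: since entry $(x,j)$ of $G^{\otimes n}$ is $f_n(j,x)=\prod_t x_t^{j_t}$, a single evaluation shows the zero block (some $q\in Q$ has $x_q=0$ while $j_q=1$) and the reduced block (all factors indexed by $Q$ equal $1$, and deleting those coordinates recovers $f_{n-|Q|}$). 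This is exactly the bookkeeping the paper's inductive step exists to avoid re-deriving, and your one-shot version is arguably cleaner; it dispenses with the set-splitting argument entirely, at the cost of re-invoking the polynomial formalism rather than reusing Lemma \ref{lem:setsIJ} as a lemma. The only point worth making explicit in your write-up is that the bit-deletion map from $S(Q)$ to $[2^{n-|Q|}]$ is order-preserving (all deleted bits equal $1$ on $S(Q)$, so the most significant differing bit of two elements of $S(Q)$ lies outside $Q$), which is what lets the submatrix, taken with rows and columns in increasing order, literally coincide with $G^{\otimes(n-|Q|)}$; the paper's proof of Lemma \ref{lem:setsIJ} relies on the same implicit fact, so this is a refinement rather than a gap.
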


\begin{proof}
When $|Q|=0$, we have $I=J=S(Q)=[2^n]$ and therefore $(G^{\otimes n})_{I,J}=G^{\otimes (n-|Q|)}=G^{\otimes n}$ and $\overline I=\emptyset$. So, the statement holds for $|Q|=0$. When $|Q|=1$, Lemma \ref{lem:recSetsIJ} reduces to Lemma \ref{lem:setsIJ}. We further proceed by induction. {Induction hypothesis: assume that the statement holds for set $Q$, 
i.e., $(G^{\otimes n})_{S(Q),S(Q)}=G^{\otimes (n-|Q|)}$ and $(G^{\otimes n})_{[2^n]\setminus S(Q),S(Q)}=\mathbf 0$. 
Let us show that the statement also holds for set $\widehat Q \triangleq Q\cup \{q\}$ with any $q\in [n]\setminus Q$. 
That is, let us show that set $\widehat Q$ 
satisfies the conditions $(G^{\otimes n})_{S(\widehat Q),S(\widehat Q)}=G^{\otimes (n-|\widehat Q|)}$ and $(G^{\otimes n})_{[2^n]\setminus S(\widehat Q),S(\widehat Q)}=\mathbf{0}$.} 

{Observe that $S(\widehat Q)=S(Q\cup\{q\})=\{ 
    j\in [2^n] \;|\; j_Q=\mathbf{1}, j_q=1 \}=S(Q)\cap S(q)$. Then we represent $(G^{\otimes n})_{S(\widehat Q),S(\widehat Q)}=(G^{\otimes n})_{S(Q\cup\{q\}),S(Q\cup\{q\})}=\left((G^{\otimes n})_{S(Q),S(Q)}\right)_{\widehat S(\widehat q),\widehat S(\widehat q)}=\left(G^{\otimes n-|Q|}\right)_{\widehat S(\widehat q),\widehat S(\widehat q)}=\left(G^{\otimes \widehat n}\right)_{\widehat S(\widehat q),\widehat S(\widehat q)}$, where $\widehat n\triangleq n-|Q|$, $\widehat q\triangleq q-|\{t\in Q \,|\, t<q\}|$ and $\widehat S(\widehat q)\triangleq \{ j\in [2^{\widehat n}]\,|\, j_{\widehat q}=1]\}$. By applying Lemma \ref{lem:setsIJ} to 
$G^{\otimes \widehat n}$ and $\widehat q$,
we obtain that $\left(G^{\otimes \widehat n}\right)_{\widehat S(\widehat q),\widehat S(\widehat q)}=G^{\otimes (\widehat n-1)}=G^{\otimes (n-|\widehat Q|)}$ and $\left(G^{\otimes \widehat n}\right)_{[2^{\widehat n}]\setminus \widehat S(\widehat q),\widehat S(\widehat q)}=\mathbf{0}$. Thus, we have proved that $(G^{\otimes n})_{S(\widehat Q),S(\widehat Q)}=G^{\otimes (n-|\widehat Q|)}$. 
} 

{Now it remains to prove that $(G^{\otimes n})_{[2^n]\setminus S(\widehat Q),S(\widehat Q)}=\mathbf{0}$. For this, we use the recently shown property $\left(G^{\otimes \widehat n}\right)_{[2^{\widehat n}]\setminus \widehat S(\widehat q),\widehat S(\widehat q)}=\mathbf{0}$ and the following set properties $S(\widehat Q)=S(Q)\cap S(q)$ and $[2^n]\setminus S(\widehat Q)=[2^n]\setminus (S(Q)\cap S(q))=([2^n]\setminus S(Q))\cup (S(Q)\cap([2^n]\setminus S(q)))$. By applying the set properties, we split matrix $(G^{\otimes n})_{[2^n]\setminus S(\widehat Q),S(\widehat Q)}$ into two submatrices $(G^{\otimes n})_{[2^n]\setminus S(Q),S(\widehat Q)}$ and $(G^{\otimes n})_{S(Q)\cap([2^n]\setminus S(q)),S(\widehat Q)}$. By applying the recently shown property, we have that the second submatrix $(G^{\otimes n})_{S(Q)\cap([2^n]\setminus S(q)),S(\widehat Q)}=(G^{\otimes \widehat n})_{[2^{\widehat n}]\setminus \widehat S(\widehat q),\widehat S(\widehat q)}=\mathbf{0}$. Due to the induction hypothesis, i.e., $(G^{\otimes n})_{[2^n]\setminus S(Q),S(Q)}=\mathbf{0}$, 
we obtain that the first submatrix $(G^{\otimes n})_{[2^n]\setminus S(Q),S(\widehat Q)}=(G^{\otimes n})_{[2^n]\setminus S(Q),S(Q)\cap S(q)}=\mathbf{0}$. Thus, we have shown that $(G^{\otimes n})_{[2^n]\setminus S(\widehat Q),S(\widehat Q)}=\mathbf{0}$. This concludes the proof.}  
\end{proof}

 The following theorem summarizes Lemmas \ref{lem:H}--\ref{lem:recSetsIJ}.
 
\begin{theorem}
\label{theor:H}
Let $m\in [2^n]\setminus \{0\} $ and set $Q\subset [n]$ satisfy 
$m_Q=\mathbf{1}$ and $i_Q\neq\mathbf{1}$ for all $i\in\mA^{(m-1)}$. 
Then 
$$H(U_m|Y_{[2^n]},U_{\mF^{(m-1)}}) \leq H_{n-|Q|,|S(Q)\cap [m]|}.$$
\end{theorem}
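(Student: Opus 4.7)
The plan is to instantiate Lemma \ref{lem:H} with the concrete sets supplied by Lemma \ref{lem:recSetsIJ}. Specifically, I would choose $I = J = S(Q)$ and $\widetilde n = n - |Q|$. Lemma \ref{lem:recSetsIJ} then immediately delivers conditions 1--3 of Lemma \ref{lem:H}, namely $|I| = |J| = 2^{\widetilde n}$, $(G^{\otimes n})_{I,J} = G^{\otimes \widetilde n}$, and $(G^{\otimes n})_{\overline I, J} = \mathbf{0}$. All that remains is to verify the two hypotheses that depend on $m$ and on the frozen set, namely conditions 4 and 5 of Lemma \ref{lem:H}.

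Condition 4, $m \in I = S(Q)$, is by definition of $S(Q)$ in Eq. \eqref{eq:SQ} equivalent to $m_Q = \mathbf{1}$, which is the first assumption of the theorem. For condition 5, $I \cap [m] \subseteq \mF^{(m-1)}$, I would argue by contrapositive: pick any $i \in [m]$ with $i \notin \mF^{(m-1)}$; since $[N] = \mA \sqcup \mF$ and $i < m$, this forces $i \in \mA^{(m-1)}$; then the second hypothesis of the theorem yields $i_Q \neq \mathbf{1}$, hence $i \notin S(Q) = I$. This establishes the required set inclusion.

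Once both conditions are in place, Lemma \ref{lem:H} applies with $\widetilde m = |I \cap [m]| = |S(Q) \cap [m]|$ and gives
\[
H(U_m | Y_{[2^n]}, U_{\mF^{(m-1)}}) \leq H_{\widetilde n, \widetilde m} = H_{n-|Q|,\, |S(Q) \cap [m]|},
\]
which is exactly the bound stated in the theorem.

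There is no real obstacle here: the theorem is essentially a packaging result that translates the hypotheses on $m$ and $\mA^{(m-1)}$ into the abstract set-theoretic conditions required by Lemma \ref{lem:H}, using Lemma \ref{lem:recSetsIJ} to supply the algebraic structure. The only subtle point worth spelling out in the written proof is the contrapositive step for condition 5, since it is the single place where the assumption $i_Q \neq \mathbf{1}$ on the active set $\mA^{(m-1)}$ is actually used.
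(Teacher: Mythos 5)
Your proposal is correct and follows essentially the same route as the paper: it instantiates Lemma \ref{lem:H} with $I=J=S(Q)$ and $\widetilde n = n-|Q|$, invokes Lemma \ref{lem:recSetsIJ} for conditions 1--3, and checks conditions 4 and 5 from the hypotheses $m_Q=\mathbf{1}$ and $i_Q\neq\mathbf{1}$ for $i\in\mA^{(m-1)}$ (your contrapositive for condition 5 is just the paper's direct argument via $\mA^{(m-1)}=[m]\setminus\mF^{(m-1)}$ restated). No gaps.
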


\begin{proof}
Let us show that such $Q$ defines sets $I=J=S(Q)$ meeting all conditions of Lemma \ref{lem:H}. 
By Lemma \ref{lem:recSetsIJ}, the sets $I=J=S(Q)$ with $\widetilde n=n-|Q|$ satisfy conditions 1--3 of Lemma \ref{lem:H}. 
It follows from the restriction $m_Q=\mathbf{1}$ and Eq. \eqref{eq:SQ} that 
$m\in S(Q)$, and therefore condition 4 of Lemma \ref{lem:H} is satisfied. 
Due to the restriction $i_Q\neq\mathbf{1}$ for all $i\in\mA^{(m-1)}=[m]\setminus\mF^{(m-1)}$, we have $\{ i\in [m] \, | \, i_Q=\mathbf{1} \}\subseteq \mF^{(m-1)}$. It follows from Eq. \eqref{eq:SQ} that $\{ i\in [m] \, | \, i_Q=\mathbf{1} \}=S(Q)\cap [m]$. Thus, we obtain 
$S(Q)\cap [m]\subseteq \mF^{(m-1)}$, which means that condition 5 of Lemma \ref{lem:H} is satisfied.  Therefore, by substituting $\widetilde n=n-|Q|$ and $\widetilde m=|S(Q)\cap [m]|$ in Eq. \eqref{eq:lemH}, we obtain $H(U_m|Y_{[2^n]},U_{\mF^{(m-1)}}) \leq H_{n-|Q|,|S(Q)\cap [m]| }$.     
\end{proof}

According to Theorem \ref{theor:H}, there always exists at least one set $Q$ if $m>0$. Specifically, {for an $m\in [2^n]\setminus\{0\}$,} it is easy to see that $Q=\{ t\in [n] \, | \, m_t=1 \}$ satisfies the condition $m_Q=\mathbf{1}$, as well as $i_Q\neq\mathbf{1}$ for all $i\in \mA^{(m-1)}$ since $i_Q=\mathbf{1}$ may be true only for $i\geq m$. In this case, $|Q|=\wt(m)$ and $|S(Q)\cap [m]|=0$, leading to a simple upper bound $H(U_m|Y_{[2^n]},U_{\mF^{(m-1)}})\leq H_{n-\wt(m),0}$.

Note that there could exist several sets $Q$ satisfying conditions of Theorem \ref{theor:H}. It is desirable to find set $Q$ that provides the tightest upper bound $H(U_m|Y_{[2^n]},U_{\mF^{(m-1)}}) \leq H_{n-|Q|,|S(Q)\cap [m]| }$. This requires to solve the following optimization problem:  
\begin{equation}
\label{eq:Q*}
Q^*=
\min_{Q\in \mathbb{Q}}
H_{n-|Q|,|S(Q)\cap [m]|},
\end{equation}
\begin{equation}
\label{eq:boldQ}
\mathbb{Q}\triangleq \{ Q\subset [n] \, | \, m_Q=\mathbf{1}, \forall i\in\mA^{(m-1)}\; i_Q\neq\mathbf{1}\}.
\end{equation}
The number of sets $Q$ to consider is upper bounded by $2^n$, i.e., by the code length $N=2^n$. Note that the condition $m_Q=\mathbf{1}$ reduces this number to $2^{\wt(m)}$, where $\wt(m)$ is the Hamming weight of the binary expansion of $m$. Since the cardinality of the set $\mA^{(m-1)}$ is upper bounded by $m$, we conclude that the time complexity of finding $Q^*$ scales as $O(2^{\wt(m)}\cdot m)$, assuming that the bit-channel entropies are pre-computed. Note that $\wt(m)\leq n$ and $m< 2^n$.     

The following lemma simplifies the search for $Q^*$ by showing that $H_{n-|Q|,|S(Q)\cap [m]|}$ cannot be decreased 
by including additional elements into $Q$.

\begin{lemma}
\label{lem:quality}
{For any sets $Q'\subset [n]$ and $Q\subset Q'$}, 
$$H_{n-|Q|,|S(Q)\cap [m]|}\leq H_{n-|Q'|,|S(Q')\cap [m]|}.$$
\end{lemma}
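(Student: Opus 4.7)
The plan is to re-express both sides of the claimed inequality as conditional entropies in the original length-$N=2^n$ probabilistic model, and then to apply the elementary fact that adding variables to the conditioning set cannot increase entropy. The substitution argument already carried out in the proof of Lemma \ref{lem:H} supplies exactly the representation that is needed.

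Concretely, I would first apply Lemma \ref{lem:recSetsIJ} to $Q$ and to $Q'$ separately. For $Q$, the choice $I=J=S(Q)$ with $\widetilde n = n-|Q|$ meets conditions 1--3 of Lemma \ref{lem:H}; condition 4 ($m\in I$) also holds in the optimization context of \eqref{eq:Q*}, since every admissible $Q\in\mathbb{Q}$ satisfies $m_Q=\mathbf{1}$ and hence $m\in S(Q)$, while $Q\subset Q'$ together with $m_{Q'}=\mathbf{1}$ gives $m\in S(Q')\subseteq S(Q)$. Rerunning the substitution step from the proof of Lemma \ref{lem:H} (introducing $\widetilde U_{[2^{\widetilde n}]}\triangleq U_I$ and $\widetilde Y_{[2^{\widetilde n}]}\triangleq Y_J$) yields the equality $H_{n-|Q|,|S(Q)\cap[m]|}=H(U_m\mid Y_{S(Q)},U_{S(Q)\cap[m]})$, and the analogous argument applied to $Q'$ gives $H_{n-|Q'|,|S(Q')\cap[m]|}=H(U_m\mid Y_{S(Q')},U_{S(Q')\cap[m]})$.

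The final step is immediate. From $Q\subset Q'$ and the definition \eqref{eq:SQ} of $S(\cdot)$, we have $S(Q')\subseteq S(Q)$, so that $Y_{S(Q')}$ and $U_{S(Q')\cap[m]}$ are subsets of $Y_{S(Q)}$ and $U_{S(Q)\cap[m]}$, respectively. The LHS conditional entropy therefore conditions on a superset of the random variables used on the RHS, and the standard ``conditioning reduces entropy'' inequality $H(X\mid A,B)\leq H(X\mid A)$ with $X=U_m$ delivers the desired bound. The only nontrivial step is the identification in the middle paragraph; once both quantities are expressed as conditional entropies of a common random variable $U_m$ under nested conditioning sets, the comparison follows automatically from information-theoretic monotonicity.
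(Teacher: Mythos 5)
Your proof is correct, but it takes a genuinely different route from the paper. The paper argues directly on the recursive structure of the polar transform: writing $\widetilde m=|S(Q)\cap[m]|$ and $\widetilde m'=|S(Q')\cap[m]|$, it observes that the binary expansion of $\widetilde m'$ is obtained from that of $\widetilde m$ by deleting the bits (all equal to $1$) indexed by $Q'\setminus Q$, so the bit-channel $W_{n-|Q'|,\widetilde m'}$ is obtained from $W_{n-|Q|,\widetilde m}$ by removing ``$+$''-type polarization layers, and then invokes Arıkan's monotonicity result that each such layer can only decrease the bit-channel entropy. You instead lift both quantities back into the length-$N$ model: using Lemma \ref{lem:recSetsIJ} and the substitution step from the proof of Lemma \ref{lem:H}, you identify $H_{n-|Q|,|S(Q)\cap[m]|}=H(U_m\mid Y_{S(Q)},U_{S(Q)\cap[m]})$ and likewise for $Q'$, note $S(Q')\subseteq S(Q)$, and finish with ``conditioning reduces entropy.'' This buys a more self-contained argument that avoids both the binary-expansion deletion bookkeeping and the appeal to the polarization monotonicity of the ``$+$'' transform, at the price of needing $m\in S(Q')$ (hence $m\in S(Q)$) so that condition 4 of Lemma \ref{lem:H} holds for both sets. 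That restriction is not a defect: the paper's own proof also uses it implicitly (the claim that the deleted bits are all equal to $1$ presumes $m_{Q'}=\mathbf{1}$), and indeed without $m_{Q'}=\mathbf{1}$ the inequality can fail, since deleting a layer where the corresponding bit of $m$ is $0$ removes a ``$-$'' transform and decreases entropy; in the only context where Lemma \ref{lem:quality} is applied, namely $Q,Q'\in\mathbb{Q}$ as defined in \eqref{eq:boldQ}, the hypothesis $m_{Q'}=\mathbf{1}$ holds automatically, exactly as you observe.
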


\begin{proof}
Using the notation of Lemma \ref{lem:H} and Theorem \ref{theor:H}, $\widetilde n=n-|Q|$ and $\widetilde m = |S(Q)\cap [m]|$. Let us denote $\delta\triangleq |Q'\setminus Q|$ and $\widetilde m'\triangleq |S(Q')\cap [m]|$. 
The entropy $H_{\widetilde n,\widetilde m}$ corresponds to the $\widetilde m$-th bit-channel of the {polar} transformation $G^{\otimes \widetilde n}$, denoted by $W_{\widetilde n, \widetilde m}$, while the entropy $H_{\widetilde n',\widetilde m'}$ characterizes the $\widetilde m'$-th bit-channel of the {polar} transformation $G^{\otimes (\widetilde n-\delta)}$, denoted by $W_{\widetilde n-\delta, \widetilde m'}$. 
It follows from Eq. \eqref{eq:SQ} that the binary expansion of $\widetilde m'$ can be obtained from the binary expansion of $\widetilde m$ by deleting $\delta$ bits equal to $1$, whose indices are defined by $Q'\setminus Q$. Therefore, by deleting the polarization layers with these indices from the {polar} transformation $G^{\otimes \widetilde n}$, the bit-channel $W_{\widetilde n, \widetilde m}$ can be transformed into 
$W_{\widetilde n-\delta, \widetilde m'}$.  
Since the deleted bits are all equal to $1$, $W_{\widetilde n-\delta, \widetilde m'}$ has lower symmetric capacity and higher entropy than 
$W_{\widetilde n, \widetilde m}$ as follows from \cite[Section III]{arikan2009channel}. 
\end{proof}

By Lemma \ref{lem:quality}, it suffices to explore only a subset of $\mathbb{Q}$ to find $Q^*$. Specifically, 
\begin{equation}
\label{eq:Q*new}
Q^*=\min_{Q\in \widehat{\mathbb{Q}}}
H_{n-|Q|,|S(Q)\cap [m]|},
\end{equation}
$$\widehat{\mathbb{Q}}\triangleq \{ Q\in\mathbb{Q} \, | \, \forall q\in Q \quad Q\setminus\{q\}\notin\mathbb{Q} 
\}.$$
{Thus, $\widehat{\mathbb{Q}}$ is the subset of $\mathbb{Q}$ such that there does not exist any distinct pair $Q, \tilde{Q}\in\widehat{\mathbb{Q}}$ with $Q\subset\tilde{Q}$.} {Set} $Q^*$ should have a low cardinality compared to the other $Q\in \mathbb{Q}$. We further propose a low-complexity greedy 
approach aiming to find $Q\in \mathbb{Q}$ with the lowest cardinality. 
Algorithm \ref{alg:Q} specifies the proposed approach in which we initialize set $Q$ {as} the empty set and then add an element $q^*$ to $Q$ at each iteration of the \textbf{while} loop until the condition $Q\in\mathbb{Q}$ is satisfied.  {At line 3, we calculate set $M\triangleq \{ t\in [n] \,|\,m_t=1\}$ containing all possible $q^*$. Note that all $q^*$ 
must belong to
the set $M$ to ensure that $Q$ satisfies to the condition $m_Q=\mathbf{1}$ 
in Eq. \eqref{eq:boldQ}. At line 4, we initialize set $\Lambda$, which is further iteratively updated at line 9. At each iteration of the \textbf{while} loop, the updated set $\Lambda$ consists of all $i\in\mA^{(m-1)}$ violating the condition $i_Q\neq\mathbf{1}$ for the current set $Q$ in Eq. \eqref{eq:boldQ}, i.e., $\Lambda=\{ i\in\mA^{(m-1)} \,|\, i_Q=\mathbf{1}\}$. Each $q^*$ is calculated at line 7 as the element $q\in M\setminus Q$ minimizing the cardinality of the updated set $\Lambda$, assigned at line 9.} 
The number of iterations in the \textbf{while} loop is upper bounded by $|M|$ since $|M|$ iterations result in $Q=M$ and such $Q$ satisfies $\forall i\in\mA^{(m-1)}$ $i_Q\neq\mathbf{1}$ as explained right below the proof of Theorem \ref{theor:H}. 

\begin{algorithm2e}
\small
\setstretch{0.8}
\caption{
\small Greedy approach to optimize set $Q$ 
\label{alg:Q}}
\DontPrintSemicolon
\SetKwFor{loop}{loop}{}{end}
\textbf{ConstructSetQ}$(n, m, \mA^{(m-1)})$ 

\Begin{
$M\leftarrow \{ t\in [n] \, | \, m_t=1\}$\;
$Q\leftarrow\emptyset$\;
$\Lambda\leftarrow \mA^{(m-1)}$\;
\While{$|\Lambda|>0$}{
    $\displaystyle q^*\leftarrow\argmin_{q\in M\setminus Q} |\{ {i}\in\Lambda          \; |\; {i}_q=1 \}|$\; 
    $Q\leftarrow Q\cup\{ q^*\}$\;
    $\Lambda\leftarrow \{ i\in\Lambda          \; |\; i_{q^*}=1 \}$\;
    }
     \Return $Q$
}
\end{algorithm2e}

Although Algorithm \ref{alg:Q} does not guarantee optimality, the resulting set $Q$ is typically equal to $Q^*$. 
The worst-case time complexity of Algorithm \ref{alg:Q} {scales as} $O(\wt(m)^2\cdot m)$ since the maximum number of the \textbf{while} loop iterations is $|M|=\wt(m)$ and the complexity of each iteration is dominated by line 7, whose complexity is upper bounded by $|M|\cdot m$ {bit comparisons and bit counting}. {From the perspective of practical implementation, the complexity of operations over bits is low, leading to a low practical complexity of Algorithm \ref{alg:Q}. For example, there are $\wt(m)^2\cdot m\leq 9^2\cdot 512$ bit operations for the code length $N=512$. A single floating-point number typically has $32$ or $64$ bits, which is comparable with $\wt(m)^2\leq 9^2$ for $N=512$.}   

The proposed \textit{\textbf{tightened lower bound}} on $\bar D_m$ can be computed as follows   
\begin{equation}
    \bar D^{\mathrm{tight}}_{m}=\begin{cases} \bar D^{\mathrm{tight}}_{m-1}+H_{n,m}, & m\in\mA, \\ \max(0, \bar D^{\mathrm{tight}}_{m-1}-\\ \quad (H_{n-|Q(m)|,|S(Q(m))\cap [m]|} - H_{n,m})), & m\in\mF,\end{cases}
    \label{eq:tightLower}
\end{equation}
where $m\in [2^n]$, $\bar D^{\mathrm{tight}}_{-1}=0$, and $Q(m)$ means the set $Q$ calculated for a particular $m$ by Algorithm \ref{alg:Q} or Eq. \eqref{eq:Q*new}. We used Algorithm \ref{alg:Q} to produce numerical results for Section \ref{sNumerical}. 

{The computation of $\bar D^{\mathrm{tight}}_{m}$ in Eq. \eqref{eq:tightLower} involves the calculation of $Q(m)$.}
Therefore, the overall time complexity {in terms of the number of bit operations for} computing $\bar D^{\mathrm{tight}}_{0},\dots,\bar D^{\mathrm{tight}}_{2^n-1}$ using Algorithm \ref{alg:Q} scales as $O(\sum_{m=0}^{2^n-1} \wt(m)^2\cdot m)$, which is upper bounded by $O(n^2\cdot 2^{2n})=O(\log(N)^2\cdot N^2)$. {The complexity in terms of the number of floating-point operations scales as $O(N)$, as follows from Eq. \eqref{eq:tightLower}.} 
{Note that the number of floating-point operations for computing $\bar D^{\mathrm{low}}_{0},\dots,\bar D^{\mathrm{low}}_{2^n-1}$ in Eq. \eqref{eq:lower} also scales as $O(N)$, but Eq. \eqref{eq:lower} does not require bit operations. 
Thus, the complexity of the proposed $\bar D^{\mathrm{tight}}_{0},\dots,\bar D^{\mathrm{tight}}_{2^n-1}$ is higher due to the necessity to perform additional bit operations. Nevertheless, we consider the complexity of the proposed bound to be acceptable for the frozen set design, since it does not exceed the complexity of the ML performance prediction described in Section \ref{sPeML}, and it is much lower than the complexity of decoding simulations. 
} 

\begin{figure}
    \centering
    \includegraphics[width=0.48\textwidth]{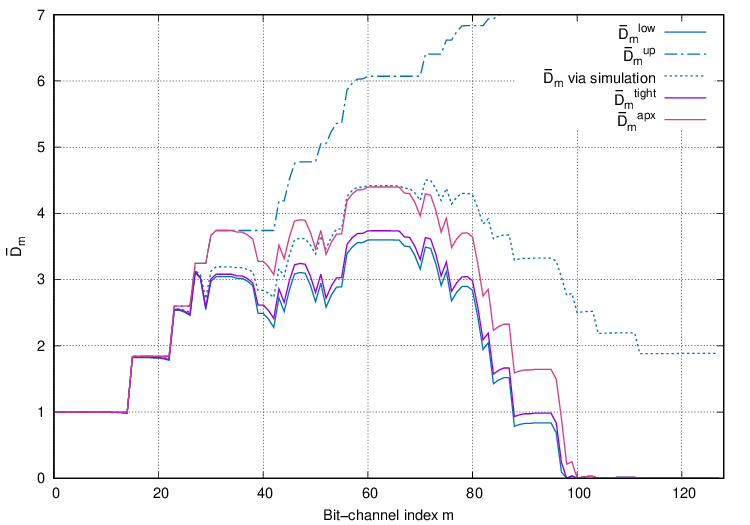} 
    \caption{{$\bar D_m$ for the $(128,64)$ code proposed in \cite{Coskun2022InfTheor}} 
    }
    \label{fig:accuracy}
\end{figure}

{Fig. \ref{fig:accuracy} illustrates the accuracy of the bounds on $\bar D_m$ in the same manner as \cite[Fig. 1]{Coskun2022InfTheor}. 
The $(128,64)$ code proposed in \cite[Section V-A]{Coskun2022InfTheor} is considered. It can be seen that the tightened lower bound $\bar D^{\mathrm{tight}}_m$ provides a noticeable improvement compared to the lower bound $\bar D^{\mathrm{low}}_m$ \cite{Coskun2022InfTheor}. In particular, $\max_{m\in [N]}\bar D^{\mathrm{tight}}_m$ is closer to the simulated $\max_{m\in [N]}\bar D_m$ than $\max_{m\in [N]}\bar D^{\mathrm{low}}_m$, where the simulated $\bar D_m$ is from \cite[Fig. 1]{Coskun2022InfTheor}. 
Note that $\bar D^{\mathrm{up}}_m$ in Fig. \ref{fig:accuracy} is the upper bound \cite[Eq. (6b)]{Coskun2022InfTheor}. 
We have also included the approximate bound $\bar D^{\mathrm{apx}}_m$, proposed below in Section \ref{sApproximate}.} 

\subsection{Proposed Approximate Bound $\bar D^{\mathrm{apx}}_{m}$ that Combines the Tightened Lower Bound and Upper Bound}
\label{sApproximate}


Reference \cite{Coskun2022InfTheor} proposed several frozen sets without 
the first bit-channel, 
i.e., the less reliable bit-channel is used to transfer information bits in \cite{Coskun2022InfTheor}. 
To the best of the authors' knowledge, such frozen sets have not been used before. 
This motivated us to investigate why the first bit-channel is not frozen in \cite{Coskun2022InfTheor}.  

Observe that \cite[Remark 2]{Coskun2022InfTheor} suggests the frozen set design criterion $\log_2(L)\geq \bar D_m$ and the usage of $\bar D^{\mathrm{low}}_{m}$ as a proxy for $\bar D_m$, where $L$ is the target SCL list size. 
Therefore, 
we explore the gap between $\bar D_{m^*}$ and $\bar D^{\mathrm{low}}_{m^*}$, where $m^*\triangleq \argmax_{m\in [2^n]} \bar D^{\mathrm{low}}_{m}$. 
We further provide expressions in terms of the information set $\mA=[2^n]\setminus\mF$. Note that $\bar D_{m}=\bar D^{\mathrm{low}}_{m}=0$ holds for $0\leq m<\min(\mA)$ and all information sets $\mA$.
{Let $\mathbb{M}\triangleq \{\min(\mA),\dots,m^*\}$. It follows from Eq. \eqref{eq:lower} that $\bar D^{\mathrm{low}}_{m^*}=\sum_{m\in\mathbb{M}} (\bar D^{\mathrm{low}}_{m}-\bar D^{\mathrm{low}}_{m-1})\geq\sum_{i\in\mA^{(m^*)}}H_{n,i}-\sum_{i\in\mathbb{M}\cap\mF^{(m^*)}}(1-H_{n,i})$. The inequality becomes an equality for the information sets with $\bar D^{\mathrm{low}}_{m}>0$ for $m\in\mathbb{M}$.  
The property $\bar D^{\mathrm{low}}_{m}>0$, $m\in\mathbb{M}$, generally holds for information sets designed for moderate-to-large $L$. The corresponding examples can be found in \cite[Figs. 1 and 3]{Coskun2022InfTheor}.} 
{
It follows from Section \ref{sCoskunDeriv} that 
$\bar D_{m^*}=\sum_{m\in\mathbb{M}} (\bar D_{m}-\bar D_{m-1})
=\sum_{m\in\mA^{(m^*)}} H_{n,m}+\sum_{m\in\mathbb{M}\cap\mF^{(m^*)}} (H_{n,m}-H(U_m|Y_{[N]}, U_{\mF^{(m-1)}}))$. 
Besides, observe that $\mathbb{M}\cap\mF^{(m^*)}=\mathbb{M}\cap\mF=\mathbb{M}\setminus\mA$.
Thus, we obtain that the gap $g(\mA)\triangleq\bar D_{m^*} - \bar D^{\mathrm{low}}_{m^*} \leq 
\sum_{m\in \mathbb{M}\setminus\mA 
} (1- H(U_m|Y_{[N]}, U_{\mF^{(m-1)}}))$. The inequality becomes an equality for the information sets with $\bar D^{\mathrm{low}}_{m}>0$, $m\in\mathbb{M}$, i.e., the gap $g(\mA)=\sum_{m\in \mathbb{M}\setminus\mA 
} (1- H(U_m|Y_{[N]}, U_{\mF^{(m-1)}}))$. 
}
Thus, the gap $g(\mA)$ substantially depends on {the cardinality of set $\mathbb{M}=\{\min(\mA),\dots,m^*\}$, which depends on} the index of the lowest information bit $\min(\mA)$. The lower $\min(\mA)$, the higher $g(\mA)$. In particular, when $\min(\mA)=0$ as for the information sets proposed in \cite{Coskun2022InfTheor}\footnote{In \cite{Coskun2022InfTheor}, the enumeration starts from 1, and therefore the lowest information bit index is equal to 1.}, the gap $g(\mA)$ is especially {high} due to {a large} number of terms  in $\sum_{m\in \{\min(\mA)=0,\dots,m^*\}\setminus\mA 
} (1- H(U_m|Y_{[N]}, U_{\mF^{(m-1)}}))$. 
That is, the lower bound $\bar D^{\mathrm{low}}_{m^*}$ especially underestimates $\bar D_{m^*}$ when $\min(\mA)$ is close to zero.
This means that the code design criterion  $\max_{m\in [2^n]} \bar D^{\mathrm{low}}_{m}$ 
gives preference to information sets with 
a very low $\min(\mA)$. That is why the information sets constructed in \cite{Coskun2022InfTheor} have $\min(\mA)=0$. The replacement of $\bar D^{\mathrm{low}}_{m}$ by our tightened lower bound $\bar D^{\mathrm{tight}}_{m}$ partially solves the problem by reducing the gap $\bar D_{m^*} - \bar D^{\mathrm{tight}}_{m^*} 
\leq \bar D_{m^*} - \bar D^{\mathrm{low}}_{m^*}$. 

To eliminate the bias towards the information sets $\mA$ having low $\min(\mA)$,  
we propose to combine our tight lower bound $\bar D^{\mathrm{tight}}_{m}$ with the upper bound $\bar D^{\mathrm{up}}_{m}\triangleq \sum_{m\in\mA^{(m)}} H_{n,m}$ from \cite[Eq. (6b)]{Coskun2022InfTheor} as follows
\begin{equation}
    \bar D^{\mathrm{apx}}_{m}=\begin{cases} \bar D^{\mathrm{apx}}_{m-1}+H_{n,m}, & m\in\mA, \\
    \bar D^{\mathrm{apx}}_{m-1}, & m\in\mF\cap [\lambda], \\
    \max(0, \bar D^{\mathrm{apx}}_{m-1} -\\ \quad (H_{n-|Q(m)|,|S(Q(m))\cap [m]|} -\\ \quad H_{n,m})), & m\in\mF\setminus [\lambda],\end{cases}
    \label{eq:tightApprox}
\end{equation}
where $m\in [2^n]$, $\bar D^{\mathrm{apx}}_{-1}=0$, and 
$\lambda$ is an integer threshold.  According to \eqref{eq:tightApprox}, $\bar D^{\mathrm{apx}}_{m}=\bar D^{\mathrm{up}}_{m}$ for $m\in [\lambda]$. Thus, for the frozen bits with low indices $m\in \mF\cap [\lambda]$, we use $\bar D^{\mathrm{apx}}_{m}-\bar D^{\mathrm{apx}}_{m-1}=\bar D^{\mathrm{up}}_{m}-\bar D^{\mathrm{up}}_{m-1}=0$. For the remaining frozen bits $m\in \mF\setminus [\lambda]$, we employ $\bar D^{\mathrm{apx}}_{m}-\bar D^{\mathrm{apx}}_{m-1}=\bar D^{\mathrm{tight}}_{m}-\bar D^{\mathrm{tight}}_{m-1}$. Note that all considered bounds process the information bits $m\in\mA$ in the same way: $\bar D^{\mathrm{apx}}_{m}-\bar D^{\mathrm{apx}}_{m-1}=\bar D^{\mathrm{low}}_{m}-\bar D^{\mathrm{low}}_{m-1}=\bar D^{\mathrm{tight}}_{m}-\bar D^{\mathrm{tight}}_{m-1}=\bar D^{\mathrm{up}}_{m}-\bar D^{\mathrm{up}}_{m-1}=H_{n,m}$. Obviously, $\bar D^{\mathrm{low}}_{m}\leq\bar D^{\mathrm{tight}}_{m}\leq \bar D^{\mathrm{apx}}_{m}\leq \bar D^{\mathrm{up}}_{m}$. If $\min(\mA)\geq \lambda$, then $\bar D^{\mathrm{apx}}_{m}=\bar D^{\mathrm{tight}}_{m}$ for all $m$. 
It can be seen that the gap between $\bar D_{m^*}$ and $\bar D^{\mathrm{apx}}_{m^*}$ depends on $\lambda$ instead of $\min(\mA)$, since $\bar D_{m^*}-\bar D^{\mathrm{apx}}_{m^*}\leq\sum_{m\in [\lambda]\setminus\mA} (H(U_m|Y_{[N]}, U_{[m]})-H(U_m|Y_{[N]}, U_{\mF^{(m-1)}})) + \sum_{m\in \{ \lambda,\dots, m^*\}\setminus\mA} (H_{n-|Q(m)|,|S(Q(m))\cap [m]|}- H(U_m|Y_{[N]}, U_{\mF^{(m-1)}}))$. 

It is very important that the value of $\lambda$ is the same for all information sets being compared during the code design process to enable a fair comparison. 
{It is desirable to set $\lambda$ close to 
$\max_{\mA}(\min(\mA))$,  where the maximization is performed over the
information sets $\mA$ being compared, assuming that the obviously bad information sets, e.g., $\{N-K,\dots,N-1\}$, are eliminated from the consideration. Note that 
$\lambda$ substantially different from $\max_{\mA}(\min(\mA))$ is unsuitable
for two reasons: (i) $\bar D^{\mathrm{apx}}_{m}$ reduces to $\bar D^{\mathrm{tight}}_{m}$ for $\mA$ with $\min(\mA)\geq \lambda$, i.e., no effect from such $\lambda$, and (ii) if $\lambda$ is substantially higher than $\max_{\mA}(\min(\mA))$, then the upper bound $\bar D^{\mathrm{up}}_{m}$ is applied to unnecessary many bits, i.e., such $\lambda$ makes $\bar D^{\mathrm{apx}}_{m}$ close to $\bar D^{\mathrm{up}}_{m}$.
We further focus on the problem of prediction $\max_{\mA}(\min(\mA))$ for fixed code parameters $(N,K)$ and varying decoding list size $L$. 
Observe that the SCL decoder with $L>1$ delays making decisions on the values of information bits in contrast with the SCL decoder with $L=1$, i.e., the SC decoder. Therefore, codes designed for SCL decoding with $L>1$ are likely to have lower indices of information bits than codes designed for SC decoding. 
For example, a typical CRC-aided polar code has an information set constructed by selecting the $K+c$ most reliable bit-channels and then eliminating $c$ bit-channels with the highest indices, where $c$ is the CRC length.  
Note that 
 the CRC-aided polar code with $c=0$ is the conventional polar code \cite{arikan2009channel}, because its information set consists of the most reliable bit-channels. This code with $c=0$ suits well for the SCL decoder with $L=1$. The higher $c$, the larger $L$ is typically used in the SCL decoder. It follows from the definition of the CRC-aided polar code information set that its $\min(\mA)$ weakly decreases with increasing $c$. Similarly, the average value of the information bit indices, i.e., $\frac{1}{K}\sum_{i\in\mA}i$, also decreases with increasing $c$. 
For $(128,64)$ CRC-aided polar codes\footnote{{For $(N,K)=(128,64)$, we calculated the bit-channel reliabilities using the Gaussian approximation
\cite{trifonov2012efficient} for AWGN, BPSK and $E_b/N_0=3.5$ dB.}}, if the CRC length $c$ increases as (0,1,...,14), then $\frac{1}{K}\sum_{i\in\mA}i$ decreases as (89.1, 88.4, 88, 86.5, 85.2, 84.5, 83.5, 82.8, 81.3, 80.7, 79.7, 79, 78, 77.3, 75.9)
 and $\min(\mA)$ weakly decreases as (30, 30, 30, 29, 29, 29, 29, 29, 27, 27, 27, 27, 27, 27, 23). The $(128,64)$ Reed-Muller code has $\frac{1}{K}\sum_{i\in\mA}i=83.3$ and $\min(\mA)=15$. 
 The $(128,64)$ code \cite{Coskun2022InfTheor} has $\frac{1}{K}\sum_{i\in\mA}i=83.2$ and $\min(\mA)=0$.
 The $(128,64)$ systematic PAC code \cite{Tonnellier2021SystPAC} has $\frac{1}{K}\sum_{i\in\mA}i=80.4$ and $\min(\mA)=15$.
 The $(128,64)$ extended BCH code has $\frac{1}{K}\sum_{i\in\mA}i=72.8$ and $\min(\mA)=7$.
 Thus, $\min(\mA')$ can be used as a proxy for $\max_{\mA}(\min(\mA))$, where $\mA'$ is the information set consisting of the most reliable bit-channels. Therefore, we recommend setting $\lambda$ close to $\min(\mA')$.}
{Since} $\min(\mA')$ is equal to $30$ and $95$ for the code parameters $(128,64)$ and $(512,256)$, respectively,  
{we} use $\lambda=2^5=32$ for $(128,64)$ and $\lambda=3\cdot 2^5=96$ for $(512,256)$. 

\subsection{Frozen Set Optimization} 
\label{sOptimization}


In this section, we consider the frozen set optimization problem 
with two objectives: minimize the decoding complexity characterized by $\bar D_{\mathrm{apx}}\triangleq \max_{m\in [2^n]} \bar D^{\mathrm{apx}}_{m}$, defined by Eq. \eqref{eq:tightApprox}, and minimize the ML decoding error probability estimate $\widetilde{P}_{\mathrm{ML}}$, computed as in Section \ref{sPeML}. 
Since these two objectives are conflicting, we are interested in constructing frozen sets leading to codes with various complexity-performance tradeoffs. 
The best tradeoffs are provided by the \textit{\textbf{Pareto front}}, which
is the set of all non-dominated $(\bar D_{\mathrm{apx}}, \widetilde{P}_{\mathrm{ML}})$, 
i.e., the Pareto front consists of pairs $(\bar D_{\mathrm{apx}}, \widetilde{P}_{\mathrm{ML}})$ such that all other pairs $(\bar D_{\mathrm{apx}}', \widetilde{P}_{\mathrm{ML}}')$ satisfy $\bar D_{\mathrm{apx}}'>\bar D_{\mathrm{apx}}$ or $\widetilde{P}_{\mathrm{ML}}'>\widetilde{P}_{\mathrm{ML}}$, where the computations are performed for fixed code length and rate. 

\subsubsection{Optimization using the Genetic Algorithm GenAlgT}
\label{sGenAlgT}
The computational complexity of finding the exact Pareto front is huge since there are plenty of frozen sets to consider. Therefore, we find an approximate Pareto front using 
a variation of the genetic algorithm \cite{Elkelesh2019} with the hash table \cite{Zhou2021GenAlgHash} to reduce time complexity and with the elimination of identical candidates from the population to preserve diversity. 
Since the genetic algorithm \cite{Elkelesh2019} has only one objective of minimizing the decoding FER/BER, 
we need to adjust it. Specifically, we modify the genetic algorithm so that it solves the constrained minimization problem: minimize $\widetilde{P}_{\mathrm{ML}}$ subject to the constraint $\bar D_{\mathrm{apx}}\leq T_D$, where the threshold $T_D$ is an input parameter of the genetic algorithm. 
To ensure that the frozen set population satisfies this constraint, 
we discard frozen sets violating this constraint from the initial population and from the crossover output. Besides, we allow the mutation operation to swap a frozen bit and a non-frozen bit only when this does not lead to the constraint violation. 

\begin{algorithm2e}
\small
\setstretch{0.8}
\caption{
\small Genetic algorithm
\label{alg:GenAlgT}}
\DontPrintSemicolon
\SetKwRepeat{Do}{do}{while}
\SetKwFunction{FInitializePopulation}{InitializePopulation}
\SetKwFunction{FPrunePopulation}{PrunePopulation}
\SetKwFunction{FExtendPopulation}{ExtendPopulation}
\SetKwFunction{FRandom}{Random}
\SetKwFunction{FMutation}{Mutation}
\SetKwFunction{FCrossover}{Crossover}
\SetKwFunction{FBest}{Best}
\SetKwFunction{FEvolve}{Evolve}
{\textbf{GenAlgT}$(N,K,T_{\mathrm{POP}},T_D,\theta)$} 

{\Begin{
    $\mathbf{P}\leftarrow \FInitializePopulation(N,K)$\;
    $\mathbf{P}\leftarrow \{ \mF\in\mathbf{P}\;|\; \bar D_{\mathrm{apx}}(\mF) \leq T_D\}$\; $\mF^*\leftarrow\FEvolve(N,K,T_{\mathrm{POP}},T_D,\theta,\mathbf{P}, [N])$\;
    \Return $\mF^*$\;
}}

{\textbf{Evolve}$(N,K,T_{\mathrm{POP}},T_D,\theta, \mathbf{P}_{\mathrm{INIT}}, \mathcal{Z})$} 

{\Begin{
$\mathbf{P}\leftarrow \FPrunePopulation(\mathbf{P}_{\mathrm{INIT}},T_{\mathrm{POP}})$\;
\For{$t=0,\dots,\theta-1$}{
    $\mathbf{P}\leftarrow \FExtendPopulation(\mathbf{P},T_D,\mathcal{Z})$\; 
    $\mathbf{P}\leftarrow \FPrunePopulation(\mathbf{P},T_{\mathrm{POP}})$\;
    \textbf{if} $\widetilde{P}_{\mathrm{ML}}(\FBest(\mathbf{P}))$ has reduced \textbf{then} $t\leftarrow 0$\; 
    }
     \Return $\FBest(\mathbf{P})$ 
}}
{\textbf{Best}$(\mathbf{P})$}

{\Begin{
    \Return $\argmin_{\mF\in\mathbf{P}} \widetilde{P}_{\mathrm{ML}}(\mF)$\; 
}}
{\textbf{PrunePopulation}$(\mathbf{P},T_{\mathrm{POP}})$}

{\Begin{
    \While{$|\mathbf{P}|>T_{\mathrm{POP}}$}{
        $\mathbf{P}\leftarrow \mathbf{P}\setminus \{ \argmax_{\mF\in\mathbf{P}}\widetilde{P}_{\mathrm{ML}}(\mF) \}$\;
    }
    \Return $\mathbf{P}$\; 
}}
{\textbf{ExtendPopulation}$(\mathbf{P},T_D,\mathcal{Z})$}

{\Begin{
    \Return $\mathbf{P} \cup \{ \FMutation(\mF,T_D,\mathcal{Z})\; |\; \mF\in\mathbf{P}\} \cup \{ \FCrossover(\mF,\mF',T_D,\mathcal{Z})\; |\; \mF,\mF'\in\mathbf{P},\; \mF\neq \mF'\}$\;
}}
{\textbf{Mutation}$(\mF,T_D,\mathcal{Z})$}
    
{\Begin{
\Do{ $\bar D_{\mathrm{apx}}(\mF') > T_D$
}{$\mF'\leftarrow \mF\cup\{\FRandom(([N]\setminus\mF)\cap \mathcal{Z})\}\setminus\{\FRandom(\mF\cap \mathcal{Z})\}$\;} 
    \Return $\mF'$
}}
{\textbf{Crossover}$(\mF^{(0)},\mF^{(1)},T_D,\mathcal{Z})$}
    
{\Begin{
    $b\leftarrow \FRandom(\{0,1\})$\;
    $\mF'\leftarrow (\mF^{(b)}\cap[N/2])\cup (\mF^{(1-b)}\cap([N]\setminus[N/2]))$\;
    \While{$|\mF'|>|\mF^{(b)}|$}{
        $\mF'\leftarrow \mF'\setminus\{\FRandom(\mF'\cap \mathcal{Z})\}$
    }
    \While{$|\mF'|<|\mF^{(b)}|$}{
        $\mF'\leftarrow \mF'\cup\{\FRandom(([N]\setminus\mF')\cap\mathcal{Z})\}$
    }
    \textbf{if} $\bar D_{\mathrm{apx}}(\mF') \leq T_D$ \textbf{then} 
       \Return $\mF'$    
    \textbf{else}
        \Return $\emptyset$\; 
}}
\end{algorithm2e}

{The resulting genetic algorithm, referred to as \textit{\textbf{GenAlgT}}, is summarized in Algorithm \ref{alg:GenAlgT}. The input arguments of GenAlgT are the code length $N$, dimension $K$, truncated population size $T_{\mathrm{POP}}$, threshold $T_D$ for $\bar D_{\mathrm{apx}}$ and threshold $\theta$ for the number of iterations with no improvement. We set $T_{\mathrm{POP}}=5$ as in \cite{Elkelesh2019}. The implicit input arguments are $E_b/N_0$ for $\widetilde{P}_{\mathrm{ML}}$ computation\footnote{{We use $E_b/N_0$ such that  $\widetilde{P}_{\mathrm{ML}}$ is between $10^{-6}$ and $10^{-2}$ for the corresponding precoded polar codes generated for various $T_D$. For example, we set $E_b/N_0=3.5$ dB when $(N,K)=(128,64)$, $E_b/N_0=2.0$ dB when $(N,K)=(512,256)$ and $E_b/N_0=1.5$ dB when $(N,K)=(512,128)$.}}, and the bit-channel entropies $\{H_{\widetilde n,i}\}_{\substack{i\in[2^{\widetilde n}] \\ 0\leq \widetilde n \leq n }}$ for $\bar D_{\mathrm{apx}}$ computation in Eq. \eqref{eq:tightApprox}.  At line 3 of Algorithm \ref{alg:GenAlgT}, function $InitializePopulation$ generates the initial population consisting of \begin{itemize}
\item The reliability-based frozen sets, constructed using the Gaussian approximation
\cite{trifonov2012efficient} for AWGN, BPSK and various $E_b/N_0$ with the step $0.25$ dB. 
\item The frozen sets interpolating between the Reed-Muller and reliability-based
frozen sets, where the reliability-based frozen sets were calculated using the Gaussian approximation
\cite{trifonov2012efficient} for AWGN, BPSK and $E_b/N_0$\footnote{\label{fEbN0SC}{We use $E_b/N_0$ such that the SC decoding error probability is about $0.01$. For example, we set $E_b/N_0=3.5$ dB when $(N,K)=(128,64)$, $E_b/N_0=2.75$ dB when $(N,K)=(512,256)$ and $E_b/N_0=2.25$ dB when $(N,K)=(512,128)$.}}, and then 
``interpolating'' frozen sets were generated by decreasing the number of minimum-weight information bit indices one by one. Note that the frozen set cardinality was preserved by including the next reliable information bit of higher weight whenever needed. 
\end{itemize}
At line 4, all frozen sets violating the constraint $\bar D_{\mathrm{apx}}\leq T_D$ are eliminated from the initial population. At line 5, we execute function $Evolve$. Compared to GenAlgT, $Evolve$ has two additional input arguments: the initial population $\mathbf{P}_{\mathrm{INIT}}$ and set $\mathcal{Z}$, which is the set of bit-channel indices that might be arbitrarily frozen or non-frozen. Note that $\mathcal{Z}$ is always equal to $[N]$ in Section \ref{sGenAlgT}; however, the cardinality of $\mathcal{Z}$ is substantially reduced in Sections \ref{sGenAlgTS} and \ref{sGenAlgTB}. At line 10, function $PrunePopulation$ reduces the population size to $T_{\mathrm{POP}}$ by eliminating the frozen sets with the highest $\widetilde{P}_{\mathrm{ML}}$. Then, the population iteratively evolves at lines 11--15. If there is no improvement in $\widetilde{P}_{\mathrm{ML}}(Best(\mathbf{P}))$ for $\theta$ iterations, then the process is terminated. Note that function $Best$ is specified by lines 18--21, and $\widetilde{P}_{\mathrm{ML}}(Best(\mathbf{P}))=\min_{\mF\in\mathbf{P}}\widetilde{P}_{\mathrm{ML}}(\mF)$. The number of 
iterations without improvement is counted by the variable $t$. At each iteration, function $ExtendPopulation$ adds new frozen sets to the population $\mathbf{P}$ by applying mutation to each frozen set from $\mathbf{P}$ and applying crossover to each distinct frozen set pair from $\mathbf{P}$. Function $Mutation$ randomly swaps a frozen bit and a non-frozen bit such that the resulting frozen set does not violate the constraint $\bar D_{\mathrm{apx}}\leq T_D$. Thus, $Mutation$ adds $T_{\mathrm{POP}}=5$ frozen sets to the population. Function $Random$ randomly selects an element from a given set. Function $Crossover$ merges two halves of distinct frozen sets to yield a new frozen set, whose cardinality is regulated by randomly adding or eliminating elements. Thus, $Crossover$ adds at most $T_{\mathrm{POP}}(T_{\mathrm{POP}}-1)/2=10$ frozen sets to the population.} 
{Note that the population is extended and pruned in the same way as in \cite{Elkelesh2019,Tonnellier2021SystPAC}. When $\mathcal{Z}=[N]$ and $T_D=\infty$, the functions  $Mutation$ and $Crossover$ are the same as in \cite{Elkelesh2019,Tonnellier2021SystPAC}.}



Genetic algorithms are known to be suboptimal \cite{enwiki:PremConv}, i.e, converge prematurely to local optima since genes of high-rated individuals (frozen sets) typically dominate the population. 
According to our experimental results, GenAlgT returns different outputs when run multiple times. To leverage this issue, we run GenAlgT algorithm $\rho$ times for each $T_D$. We use $\rho=5$ and consider various values of $T_D$ with the granularity $0.1$. 

\subsubsection{{Complexity} Reduction. S-Constraint and GenAlgTS} 
\label{sGenAlgTS}
Given the code length $N=2^n$ and dimension $K$, the {genetic algorithm GenAlgT performs a search over the set of} all frozen sets $\mF\subset [N]$ of cardinality $N-K$. The number of such frozen sets $\mF$
is equal to the binomial coefficient $\binom{N}{K}$, which grows rapidly with $N$ and $\min (K,N-K)$. 
We resolve this issue by introducing our constraints on the frozen set structure and incorporating them into the genetic algorithm. 

Let $r$ be the reliability sequence consisting of the bit-channel indices arranged in ascending order of their
reliabilities. We construct $r$ by using the Gaussian approximation \cite{trifonov2012efficient} for {AWGN, BPSK and $E_b/N_0$}\textsuperscript{\ref{fEbN0SC}}. Alternatively, the 5G reliability sequence \cite{polarNR2018} might be used. 
It has been shown in \cite{mondelli2014from} that both bit-channel reliabilities and index weights are of great importance when designing polar codes for SCL decoding. 
Following this direction, we characterize the closeness of a given frozen set $\mF$ to the reliability-based frozen set 
{for a given bit-channel index weight $v$ by an integer number $\alpha^\mF_v$ defined as} 
\begin{align}
\alpha^\mF_v&\triangleq \begin{cases}c_v, & 0\leq v< l^\mF,\\
\min \{ q\in [c_v] \, | \, \tau_{v,q}\notin\mF\}, & l^\mF\leq v\leq n, 
\end{cases} \nonumber \\
l^\mF&\triangleq \min_{i\in [N]\setminus\mF} \wt(i),\label{eq:l}  
\end{align}
where {$c_v\triangleq\binom{n}{v}$ is the binomial coefficient}, $2^{l^\mF}$ is the minimum Hamming distance of a pure polar code with the frozen set $\mF$, and {the vector $\tau_{v}$ is the sequence consisting of all bit-channel indices of weight $v$ arranged in ascending order of their reliabilities}.
That is,  $\tau_v$ is the subsequence of the reliability sequence $r$ consisting of $r_i$ with $\wt(r_i)=v$, $0\leq i < N$. 
{Note that} the reliability-based frozen set $\mR_S$ of cardinality $N-K-S$ 
\begin{equation}
\mR_S\triangleq\{r_i \,|\, i \in [N-K-S]\}
\label{eq:RS}
\end{equation}
{satisfies} ${\mR_S=}\{ \tau_{v,q} \, |\, q\in [\alpha^{\mR_S}_v], 0\leq v\leq n\}$. 
Let 
\begin{equation}
\ell\triangleq l^{\mR_0}.
\label{eq:ell}
\end{equation}
Since the bit-channels 
\begin{align}
{\mathcal{H}_{\mathrm{inf}}}&{\triangleq \{ i\in [N]\;|\;\ell+2\leq \wt(i)\leq n\}\setminus \mR_0}\label{eq:permInf}\\ &{=}\{\tau_{v,q} \,|\,  q\in [c_v]\setminus[\alpha^{\mR_0}_v],\ell+2\leq v\leq n\}\nonumber
\end{align}
have high-weight indices and high reliabilities, they are unlikely to generate codewords producing errors under SCL decoding. 
Therefore, the following constraint suggests that these bit-channels 
are always non-frozen, while the least reliable bit-channels are always frozen. 

\begin{definition}[S-constraint]
Given an integer parameter $S$, a frozen set $\mF$ satisfies the S-constraint iff 
\begin{equation*}
\begin{cases}
\alpha^\mF_v=c_v, &0\leq v< \ell, \\
\alpha^\mF_v\geq \alpha^{\mR_S}_v, &\ell\leq v\leq n,\\ 
\forall q\geq \alpha^{\mR_0}_v\;\;\tau_{v,q}\notin\mF, &\ell+2\leq v\leq n, 
\end{cases}
\end{equation*}
{where $\ell = \min_{ i\in [N]\setminus\mR_0}\wt(i)$ as follows from Eqs. \eqref{eq:l} and \eqref{eq:ell}.} 
\label{def:S}
\end{definition}

{Note that we express the three conditions 
in Definition \ref{def:S}
in terms of $\alpha^{\mF}_v$ to serve as a basis for Definition \ref{def:B}. The first condition means that $\mF$ is a superset of the frozen set of Reed-Muller code with the minimum distance $2^\ell$. The second condition means that $\mF$ is also a superset of $\mR_S$. The third condition means that $\mF\cap\mathcal{H}_{\mathrm{inf}}=\emptyset$.}

\begin{figure}
    \centering  \includegraphics[width=0.48\textwidth]{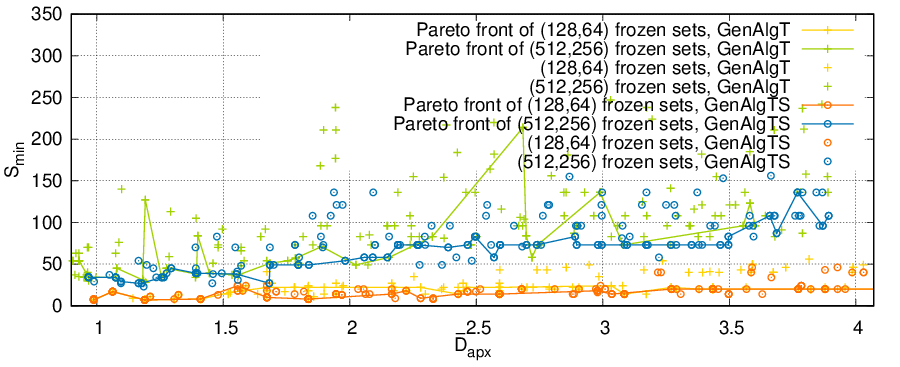}
    \caption{
    {$S_{\min}$ of} the frozen sets generated by the genetic algorithms}
    \label{fig:S_Dapprox}
\end{figure}
{Our motivation for the second condition in Definition \ref{def:S} comes from the experimental results 
 for GenAlgT shown in Fig. \ref{fig:S_Dapprox}, where ``+'' 
indicates the lowest $S$ such that 
$\mR_S\subset\mF$ for an individual frozen set $\mF$ generated by GenAlgT, i.e.,} $${S_{\min}(\mF)\triangleq\min\{ S\in [N-K] \,|\, \mR_S\subset\mF\}.}$$ 
{We focus on the best sets $\mF$ for various $\bar D_{\mathrm{apx}}$, i.e., $\mF$ belonging to the Pareto front as explained at the beginning of Section \ref{sOptimization}. These frozen sets are connected by lines in Fig. \ref{fig:S_Dapprox}. 
It can be seen that the Pareto front is characterized by a lower {$S_{\min}$} than the average {$S_{\min}$}. For $(N,K)=(512,256)$, almost all $\mF$ in the Pareto front satisfy $S_{\min}(\mF)\leq 160$, except for one outlying frozen set. Note that we are interested in the general trend and can ignore the outliers, since there exist plenty of frozen sets with almost identical $(\bar D_{\mathrm{apx}}, \widetilde{P}_{\mathrm{ML}})$, and it is very likely that one of them would have low {$S_{\min}$}. Therefore, we propose to consider only frozen sets $\mF$ being supersets of $\mathcal{\mR}_S$ with a certain $S$ in the genetic algorithm.} 

{The S-constraint integration into the genetic algorithm from Section \ref{sGenAlgT} is straightforward. The resulting algorithm is referred to as the \textit{\textbf{GenAlgTS}} and specified in Algorithm \ref{alg:GenAlgTS}. Compared to GenAlgT, GenAlgTS has an additional input argument $S$ and additional lines 5--9. At lines 5--6, we compute the set $\mathcal{H}_{\mathrm{fr}}$ of permanently frozen bits to satisfy the first and second conditions of the S-constraint. At line 7, we compute the set $\mathcal{H}_{\mathrm{inf}}$ of permanently non-frozen bits to satisfy the third condition of the S-constraint. Then we eliminate frozen sets not satisfying the S-constraint from the initial population at line 8. 
At line 9, we initialize the set $\mathcal{Z}$ of bits that might be arbitrarily frozen or non-frozen. Finally, we execute the function $Evolve$ with the set $\mathcal{Z}$. Note that after the execution of line 8, we have $\mF\cap \mathcal{H}_{\mathrm{fr}}=\mathcal{H}_{\mathrm{fr}}$ and $\mF\cap\mathcal{H}_{\mathrm{inf}}=\emptyset$ for all $\mF\in\mathbf{P}$. Function $Evolve$ preserves this property by allowing $Mutation$ and $Crossover$ to perform operations only over bit indices from $\mathcal{Z}$.} 
\begin{algorithm2e}
\small
\setstretch{0.8}
\caption{
{\small Genetic algorithm with the S-constraint}
\label{alg:GenAlgTS}}
\DontPrintSemicolon
\SetKwRepeat{Do}{do}{while}
\SetKwFunction{FInitializePopulation}{InitializePopulation}
\SetKwFunction{FRandom}{Random}
\SetKwFunction{FMutation}{Mutation}
\SetKwFunction{FCrossover}{Crossover}
{\textbf{GenAlgTS}$(N,K,T_{\mathrm{POP}},T_D,\theta,S)$} 

{\Begin{
    $\mathbf{P}\leftarrow \FInitializePopulation(N,K)$\;
    $\mathbf{P}\leftarrow \{ \mF\in\mathbf{P}\;|\; \bar D_{\mathrm{apx}}(\mF) \leq T_D\}$\;
    Compute $\mR_S$ in Eq. \eqref{eq:RS} and $\ell$ in Eq. \eqref{eq:ell}\;
    $\mathcal{H}_{\mathrm{fr}}\leftarrow \mR_S\cup \{i\in [N] \;|\; \wt(i)<\ell\}$\;
    Compute $\mathcal{H}_{\mathrm{inf}}$ in Eq. \eqref{eq:permInf}\;
    $\mathbf{P}\leftarrow \{ \mF\in\mathbf{P}\;|\;  \mathcal{H}_{\mathrm{fr}}\subset\mF, \mathcal{H}_{\mathrm{inf}}\cap\mF=\emptyset\}$\;
    $\mathcal{Z}\leftarrow [N]\setminus (\mathcal{H}_{\mathrm{fr}}\cup\mathcal{H}_{\mathrm{inf}})$\;
    $\mF^*\leftarrow\FEvolve(N,K,T_{\mathrm{POP}},T_D,\theta,\mathbf{P}, \mathcal{Z})$\;
    \Return $\mF^*$\; 
}}
\end{algorithm2e}

Fig. \ref{fig:SConstrFrSet} illustrates the S-constraint on the frozen set structure. The blue colour indicates  the {set $\mathcal{H}_{\mathrm{fr}}$ of} permanently frozen bits, whose {cardinality} is $\mathcal{N}^{
\mathrm{fr}
}_S\triangleq(\sum_{v=0}^{\ell-1} c_v) + (\sum_{v=\ell}^{n} \alpha^{\mR_S}_v)$. The red colour indicates the {set $\mathcal{H}_{\mathrm{inf}}$ of} permanently non-frozen bits, whose {cardinality} is $\mathcal{N}^{
\mathrm{inf}
}\triangleq\sum_{v=\ell+2}^{n} (c_v-\alpha^{\mR_0}_v)$. 
\begin{figure}
\centering
\includegraphics[width=0.4\textwidth]{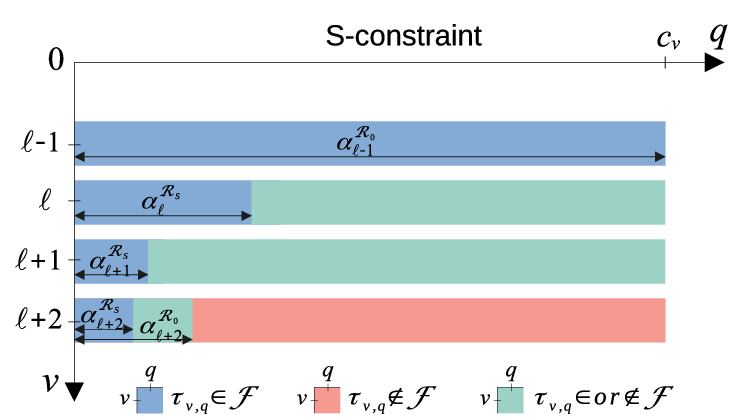}
    \caption{S-constrained frozen set structure}
    \label{fig:SConstrFrSet}
\end{figure}
These $\mathcal{N}^{\mathrm{fr}}_S+\mathcal{N}^{\mathrm{inf}}$ bits {are} eliminated from consideration during the frozen set optimization {in GenAlgTS}. Thus, the frozen set optimization under the S-constraint reduces to the optimization over the green area {corresponding to the set $\mathcal{Z}$} consisting of ${|\mathcal{Z}|=}N-\mathcal{N}^{\mathrm{fr}}_S-\mathcal{N}^{\mathrm{inf}}$ bits. 
The number of $(N,K)$ S-constrained frozen sets is $\Omega_S\triangleq\binom{N-\mathcal{N}^{\mathrm{fr}}_S-\mathcal{N}^{\mathrm{inf}}}{K-\mathcal{N}^{\mathrm{inf}}}$. Obviously, 
$\Omega_S$ increases with $S$ from $\min_S(\Omega_S)=\Omega_{0}{=1}$ to $\max_S(\Omega_S)=\Omega_{N-K}$. 
Since {$\Omega_S$} rapidly grows with $S$, it is desirable to identify the lowest $S$ that preserves the best frozen sets. 
{For example, we can calculate $\Omega_S$ for $S=160$.} 
For $(N,K)=(512,256)$, we have $\ell=4$, $\mathcal{N}^{\mathrm{inf}}=129$, $\mathcal{N}^{\mathrm{fr}}_S=138$, and therefore $\Omega_S=\binom{245}{127}$. 
For $(N,K)=(128,64)$, we have 
$\ell=3$, $\mathcal{N}^{\mathrm{fr}}_S=29$, $\mathcal{N}^{\mathrm{inf}}=29$, and consequently $\Omega_S=\binom{70}{35}$. In both cases, $\Omega_S$ is much lower than $\binom{N}{K}$. 
This complexity reduction is achieved without reducing the Pareto front quality, as shown in Section \ref{sGeneratedFrSets}. 
{Furthermore, a lower value of $S$ would suffice as can be seen from the GenAlgTS Pareto front in Fig. \ref{fig:S_Dapprox}.} 

{For the general case of $(N,K)$, a proper value of $S$ can be found by running GenAlgTS with $S$ equal to $0$ and then increasing $S$ as long as $\widetilde{P}_{\mathrm{ML}}$ is decreasing for a given $\bar D_{\mathrm{apx}}
$. Since $S$ generally increases with $\bar D_{\mathrm{apx}}
$, it suffices for find a proper value of $S$ for the largest expected $\bar D_{\mathrm{apx}}
$. Note that $S$ is lower bounded by $0$ and upper bounded by $N-K$. $S$ may be increased with a fixed step size, e.g., $40$, or a variable step size.}   

\subsubsection{{Complexity} Reduction. B-Constraint and GenAlgTB}
\label{sGenAlgTB}

The gradual increase of {$S_{\min}$ in Fig. \ref{fig:S_Dapprox}} is due to the growing 
discrepancy between our frozen sets $\mF$ and the reliability-based frozen set ${\mR_0}$. This growing discrepancy can be characterized not only by {$S_{\min}$} but also by the number $\Delta$ of frozen bit-channels with the highest indices of weight $\ell$ 
$${\Delta(\mF)\triangleq c_{\ell}-1-\max \{q\in [c_{\ell}] \,|\, \tau_{\ell,q}\notin\mF\}.}$$ {In \cite[Corollary 1]{milos2024Reinf}, we have proven that $\Delta$ must be equal or higher than $n-\ell+1$ for 
 precoded polar codes with the minimum distance $\geq 1.5\cdot 2^{\ell}$ being subcodes of Reed-Muller codes with the minimum distance $2^{\ell}$.} {Note that Definition \ref{def:S} ensures that the minimum distance of precoded polar codes with the S-constrained frozen sets is lower bounded\footnote{{The first 
 S-constraint requirement, i.e. $\alpha^\mF_v=c_v$ for $0\leq v<\ell$ 
 in Definition \ref{def:S}, 
 ensures that all information bit indices have weights $\geq\ell$. Precoded polar codes with such information bit indices have the minimum distance $\geq 2^{\ell}$ \cite[Section III]{Rowshan2022Impr}.}} by $2^{\ell}$.}  
\begin{figure}
    \centering  \includegraphics[width=0.48\textwidth]{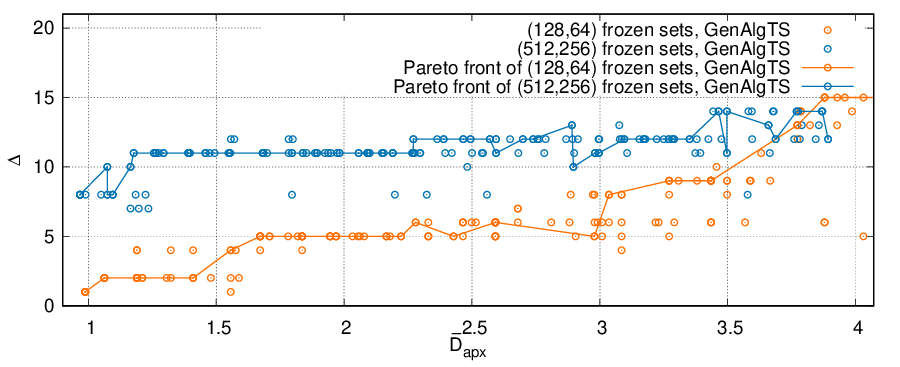}
    \caption{$\Delta$ of the frozen sets generated by the genetic algorithm}
    \label{fig:Delta}
\end{figure}
{Fig. \ref{fig:Delta} shows that $\Delta$ of the S-constrained frozen sets increases with $\bar D_{\mathrm{apx}}$.} {We focus on the behaviour of $\Delta$ for $\mF$ belonging
to the Pareto front, as in the case of $S_{\min}$ in Fig. \ref{fig:S_Dapprox}.} 

Motivated by the results on $\Delta$, we propose to incorporate similar characteristics for the two lowest information index weights into the frozen set structure. 
Specifically, we represent $\Delta$ of a given frozen set $\mF$ as $\Delta{(\mF)}=c_\ell-1-\chi^\mF_\ell$ 
and propose to optimize $\chi^\mF_{l^\mF}$ and $\chi^\mF_{l^\mF+1}$, where 
\begin{equation}\chi^\mF_v\triangleq \max_{q\in [c_v]}\{ z_{v,q}\notin\mF\},\quad l^\mF\leq v \leq n,
\label{eq:chi}
\end{equation}
and $z_v$ is the subsequence of $(0,1,\dots,N-1)$ consisting of all elements of weight $v$. Obviously, the length of $z_v$ is $c_v$. The value of $\chi^\mF_{l^\mF}$ and $\chi^\mF_{l^\mF+1}$ are expected to decrease with increasing $\bar D_{\mathrm{apx}}$ due to their connection with $\Delta$. Besides, we propose to assume that highly reliable bit-channels having $v$-weight indices less than $\chi^\mF_{v}$ are non-frozen, $v\in\{ l^\mF,l^\mF+1\}$. This assumption reduces the number of bit-channels allowed to be arbitrarily frozen or non-frozen. The resulting frozen set structure is 
formalized as the B-constraint, where the flexibility of frozen set $\mF$ 
is controlled by integers $B_{l^\mF}$ and $B_{l^\mF+1}$. 

\begin{definition}[B-constraint]
\label{def:B}
Given an integer vector $B$,  
a frozen set $\mF$ satisfies the B-constraint iff  
\begin{equation*}
\begin{cases}
\beta^\mF_v< \alpha^\mF_v+B_v, & l^\mF\leq v\leq l^\mF+1,\\ 
\beta^\mF_v=\alpha^\mF_v-1, & l^\mF+2\leq v\leq n, 
\end{cases}
\end{equation*}
where 
\begin{equation*}
\beta^\mF_v\triangleq \max\{ q\in [c_v] \,|\, \tau_{v,q}\in\mF, \tau_{v,q}< \chi^\mF_v\}, \quad l^\mF\leq v\leq n.
\end{equation*}
\end{definition}
\begin{figure}
\centering
\includegraphics[width=0.4\textwidth]{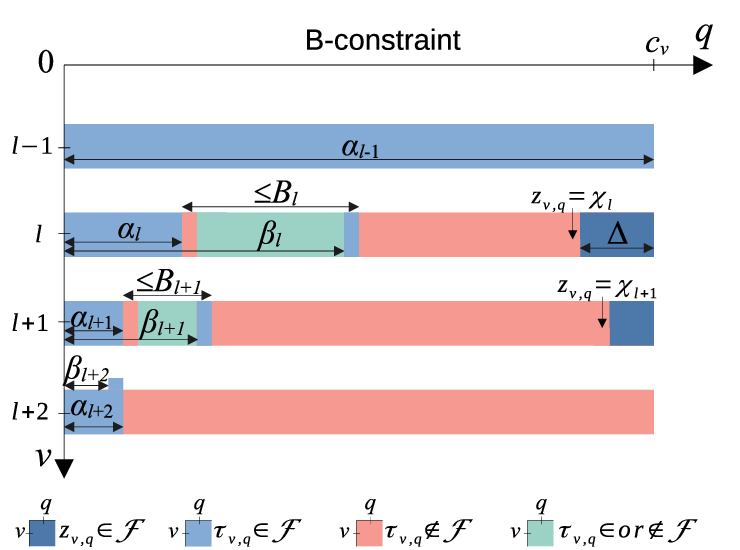}
    \caption{B-constrained frozen set structure}
    \label{fig:BConstrFrSet}
\end{figure}
The B-constraint on the frozen set structure is illustrated by Fig. \ref{fig:BConstrFrSet}, where the upper index ``$\mF$'' is omitted for simplicity. Note that the dark and light blue frozen bits are arranged in ascending order of their indices and reliabilities, respectively. That is, the $v$-weight dark and light blue frozen bits are arranged as in the sequences $z_v$ and $\tau_v$, respectively. The rest of the notation is as in Fig. \ref{fig:SConstrFrSet}.


\begin{remark}
The B-constrained frozen set structure generalizes our frozen set structure \cite{milos2024Reinf}. Specifically, the triplet-tuned frozen sets from \cite{milos2024Reinf} can be represented as special cases of 
the B-constrained frozen sets with  $B_{l^\mF}=B_{l^\mF+1}=0$ and $\chi^\mF_{v}=z_{v,c_v-1}$ for $v=l^\mF+1$.
\end{remark}

The number of B-constrained frozen sets $\mF$ with fixed $l^\mF$ is upper bounded by
$$ \left[\prod_{v=l^\mF}^{l^\mF+1}(c_v-B_v)2^{B_v}\right]\left[\prod_{v=l^\mF+2}^{n}c_v\right] c_{l^\mF}.$$ 
 By imposing limits on $\alpha^\mF_v$ as in Definition \ref{def:S}, we obtain the following upper bound on the number of $\mF$ satisfying both B-constraint and S-constraint with $l^\mF=\ell$
$$ \left[\prod_{v=\ell}^{\ell+1}(c_v-\alpha^{\mR_S}_v-B_v)2^{B_v}\right]\left[\prod_{v=\ell+2}^{n}(\alpha^{\mR_0}_v-\alpha^{\mR_S}_v+1)\right]c_\ell.$$
This number is minimized when $B_v=0$ and maximized when $B_v=c_v-\alpha^{\mR_S}_v-1$. 



\begin{algorithm2e}
\small
\setstretch{0.8}
\caption{\small Genetic algorithm with the B-constraint
\label{alg:GenAlgTB}}
\DontPrintSemicolon
\SetKwRepeat{Do}{do}{while}
\SetKwFor{Repeat}{repeat}{}{end}
\SetKwFunction{FInitializePopulation}{InitializePopulation}
\SetKwFunction{FPrunePopulation}{PrunePopulation}
\SetKwFunction{FFlexibleElements}{FlexibleElements}
\SetKwFunction{FRandom}{Random}
\SetKwFunction{FMutation}{Mutation}
\SetKwFunction{FBest}{Best}
{\textbf{GenAlgTB}$(N,K,T_{\mathrm{POP}},T_D,\theta,B,X)$} 

{\Begin{
$\mathbf{P}\leftarrow \FInitializePopulation(N,K)$\;
    $\mathbf{P}\leftarrow \{ \mF\in\mathbf{P}\;|\; \bar D_{\mathrm{apx}}(\mF) \leq T_D\}$\;
    $\mathbf{P}\leftarrow \{ \mF\in\mathbf{P}\;|\;\mF$ satisfies both conditions in  Def. \ref{def:B}$\}$\;
$\mathbf{P}\leftarrow \FPrunePopulation(\mathbf{P},T_{\mathrm{POP}})$\;
\For{$t=0,\dots,\theta-1$}{
    $\mathbf{P}'\leftarrow \mathbf{P}$\; 
    \For{$\mF\in\mathbf{P}'$}{  
    Compute the set $\mathcal{Z}$ for $\mF$ in Eq. \eqref{eq:flexZ}\; 
    \Repeat{$\mathrm{3\; times}$}{ 
    $\mathbf{P}\leftarrow \mathbf{P}\cup\{\FMutation(\mF,T_D,\mathcal{Z})\}$\;
    }} 
    $\mathbf{P}\leftarrow \FPrunePopulation(\mathbf{P},T_{\mathrm{POP}})$\;
    \textbf{if} $\widetilde{P}_{\mathrm{ML}}(\FBest(\mathbf{P}))$ has reduced \textbf{then} $t\leftarrow 0$\;
    }
     \Return $\FBest(\mathbf{P})$
}}

\end{algorithm2e}

Observe that in contrast with the S-constraint, the B-constraint cannot be integrated into GenAlgT by simply eliminating a subset of bit-channels from consideration. {The proposed B-constrained genetic algorithm, referred to as \textit{\textbf{GenAlgTB}}, is shown in Algorithm \ref{alg:GenAlgTB}. Lines 3--4 are as in GenAlgT and GenAlgTS. At line 5, we eliminate frozen sets not satisfying the B-constraint from the initial population $\mathbf{P}$. 
It can be seen that lines 6--18 are similar to the lines of the function $Evolve$, specified in Algorithm \ref{alg:GenAlgT} and used by GenAlgT and GenAlgTS. The difference consists in replacing the call of $ExtendPopulation$ in $Evolve$ by lines 8--14 in GenAlgTB. Lines 8--14 implement the population extension procedure with the B-constraint. Note that the B-constraint is incompatible with the function $ExtendPopulation$, because $ExtendPopulation$ uses the same $\mathcal{Z}$, which is the set of bits that might be arbitrarily frozen or non-frozen, for all frozen sets $\mF$ in the population $\mathbf{P}$. In $ExtendPopulation$, the population is extended by applying $Crossover$ and $Mutation$ to $\mF\in\mathbf{P}$. $Crossover$ does not preserve the B-constrained frozen sets, and therefore, we replace the call of $Crossover$ by two additional calls of $Mutation$ at lines 11-13 of GenAlgTB. The number of additional $Mutation$ calls is two for the following reasons. Since 
$T_{\mathrm{POP}}=5$ as in \cite{Elkelesh2019}, the crossover generates $T_{\mathrm{POP}}(T_{\mathrm{POP}}-1)/2=10$ frozen sets. To preserve the maximum population size $20$, we apply two additional mutations to each frozen set $\mF$ from the truncated population $\mathbf{P}'$ 
instead of the crossover since this generates $2T_{\mathrm{POP}}=10$ frozen sets. Thus, for each $\mF\in\mathbf{P}'$, we produce three frozen sets using $Mutation$ at lines 10--13 and add them to the extended population $\mathbf{P}$.} 


{It remains to specify the computation of $\mathcal{Z}$ for a given frozen set $\mF$ at line 10 of GenAlgTB. We first explain a general idea and then provide an equation for $\mathcal{Z}$.} To preserve the B-constrained structure of a frozen set $\mF$, we allow the mutation operation to perform only the following actions: (i) increment $\chi^{\mF}_v$, $l^\mF\leq v\leq l^\mF+1$, (ii) decrement $\beta^{\mF}_v$, $l^\mF+2\leq v\leq n$, or (iii) modify bits within the flexible region of size $B_{l^\mF}+B_{l^\mF+1}$. Action (i) is due to the initial population 
consisting of the reliability-based frozen sets and frozen sets interpolating between the Reed-Muller and reliability-based frozen sets that have $\chi^{\mF}_v=0$. Thus, $\chi^{\mF}_v$ gradually increases with the increasing number of genetic algorithm iterations. Action (ii) is since $\alpha^{\mF}_v$ and $\beta^{\mF}_v$ are expected to decrease with the increasing $\chi^{\mF}_v$. Action (iii) makes use of the flexibility allowed by $B$. 
These actions are implemented as a random swap of a frozen bit and a non-frozen bit from the set {$\mathcal{Z}$ equal to} 
\begin{align}
\{ \tau_{v,\min(\beta^\mF_v,\beta^{\mR_0}_v)+X_v-i},\chi^\mF_v\,|\, i\in [B_v]\}_{v=l^\mF}^{l^\mF+1} 
\cup \{ \tau_{v,\beta^\mF_v}\}_{v=l^\mF+2}^n,
\label{eq:flexZ}
\end{align}
where $X_v$ is an integer parameter, and $\tau_{v,q}$ with $q\notin[c_v]$ are skipped. 

It is easy to see {from Eq. \eqref{eq:flexZ}} that the cardinality of {the set $\mathcal{Z}$} is at most $\mathcal{N}^{\mathrm{flex}}\triangleq n-l^\mF+1+B_{l^\mF}+B_{l^\mF+1}$, which is significantly lower than $N-\mathcal{N}^{\mathrm{fr}}_S-\mathcal{N}^{\mathrm{inf}}$ in GenAlgTS and $N$ in {GenAlgT and} \cite{Elkelesh2019}. Thus, only a small portion of bits are allowed to mutate at each iteration of the genetic algorithm. Although the {number of possible frozen sets} can be minimized by using $B_{l^{\mF}}=B_{l^{\mF}+1}=0$, such a choice leads to a rigid frozen set structure and may eliminate many good frozen sets from consideration. Therefore, the values of $B_{l^{\mF}}$ and $B_{l^{\mF}+1}$ are selected to balance the frozen set flexibility and its design complexity. We set $X_{l^{\mF}}=8$, $X_{l^{\mF}+1}=6$, $B_{l^{\mF}}=37$, and $B_{l^{\mF}+1}=8$ for the code parameters $(128,64)$ and $B_{l^{\mF}+1}=23$ for $(512,256)$.  
This defines the number of bits allowed to mutate $\mathcal{N}^{\mathrm{flex}}=50$ for the code parameters $(128,64)$ and $\mathcal{N}^{\mathrm{flex}}=66$ for $(512,256)$. {For the code parameters $(512,128)$, the same settings can be used as for $(512,256)$. However, the optimization complexity can be further reduced by setting $X_{l^{\mF}}=6$, $X_{l^{\mF}+1}=0$, $B_{l^{\mF}}=24$, and $B_{l^{\mF}+1}=5$ for $(512,128)$ without performance degradation, leading $\mathcal{N}^{\mathrm{flex}}=34$.} 
\section{Numerical Results}
\label{sNumerical}

In this section, we evaluate the proposed frozen set design method and provide a comparison with the state-of-the-art for the AWGN channel with BPSK modulation. 
\subsection{Frozen Set Design} 
\label{sGeneratedFrSets}


\begin{figure}
    \centering
    \includegraphics[width=0.47\textwidth]{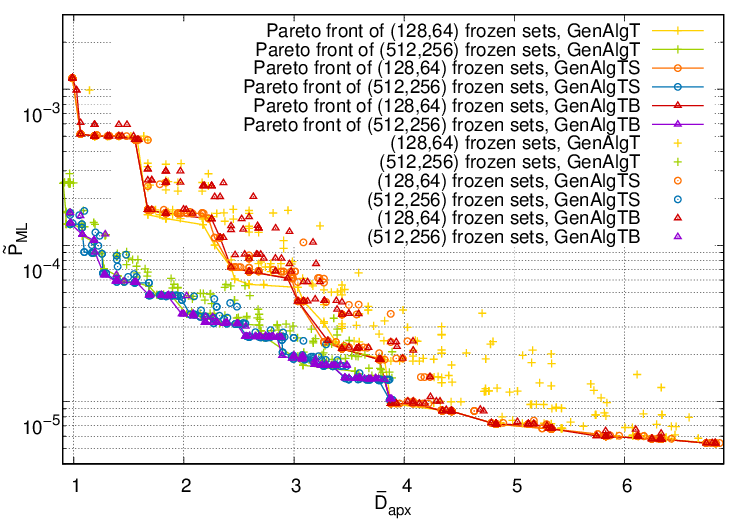}
    \caption{Frozen sets generated by the genetic algorithms}
    \label{fig:Pareto_Dapprox}
\end{figure}

Fig. \ref{fig:Pareto_Dapprox} characterizes the frozen sets generated by the proposed GenAlgT, GenAlgTS and GenAlgTB for $(N,K)\in\{ (128,64), (512,256) \}$, $T_D\in\{ 1.0, 1.1, 1.2, \dots\}$ and $\rho=5${, where $T_D$ is the upper limit for $\bar D_{\mathrm{apx}}$ and $\rho$ is the number of genetic algorithm runs, as defined in Section \ref{sGenAlgT}.}  The pairs $(\bar D_{\mathrm{apx}},\widetilde{P}_{\mathrm{ML}})$ found by GenAlgT, GenAlgTS and GenAlgTB are marked as ``+'', ``$\circ$'' and ``\begin{scriptsize}$\triangle$\end{scriptsize}'', respectively, where $\widetilde{P}_{\mathrm{ML}}$ is computed using {the ensemble-averaged weight distribution\footnote{{The weight distribution is averaged over the ensemble of precoded polar codes with a given frozen set and all possible frozen bit expressions.}} 
with} the low-complexity union bound for intermediate iterations and the tight TSB bound for the final output at $E_b/N_0=3.5$ dB for the code parameters $(128, 64)$ and 
 $E_b/N_0=2.0$ dB for $(512,256)$. 
It can be seen that GenAlgT, GenAlgTS and GenAlgTB provide similar Pareto fronts, indicating that the {complexity} reduction of GenAlgTS does not deteriorate the frozen set performance. 
Moreover, the outputs of GenAlgTS and GenAlgTB 
are concentrated closer to the Pareto front than that of GenAlgT. This is because GenAlgTS and GenAlgTB have fewer local optima than GenAlgT 
due to the reduced {number of possible solutions}. As a result, GenAlgTS and GenAlgTB need a lower $\rho$ to reach saturation than GenAlgT, where the saturation is achieved if an increase in $\rho$ does not provide any reduction of $\min_{i\in [\rho]} \widetilde{P}_{\mathrm{ML},i}$, where $\widetilde{P}_{\mathrm{ML},i}$ is the $i$-th run output of GenAlgT/GenAlgTS/GenAlgTB.
That is why the Pareto front of GenAlgTS/GenAlgTB is slightly better on average than that of GenAlgT in the case of parameters $(512,256)$. In the case of $(128,64)$, 
the Pareto front of GenAlgT is slightly better on average than that of GenAlgTS/GenAlgTB, 
since for short-length codes, the {number of possible solutions in} GenAlgT is small enough to find near-optimal solutions. 
Note that the computational complexities  
of GenAlgT and GenAlgTS
grow rapidly with the code length $N$, 
while the complexity of GenAlgTB grows slowly with $N$, as follows from the description in Section \ref{sOptimization}. 
Besides, it can be seen from Fig. \ref{fig:Pareto_Dapprox} that the Pareto front of $(128,64)$ frozen sets has a more stepwise character than that of $(512,256)$ frozen sets. This implies that the Pareto front becomes smoother with increasing code length $N$. 



The computational complexity of genetic algorithms is often characterized by the number of iterations. 
\begin{figure}
    \centering
    \includegraphics[width=0.47\textwidth]{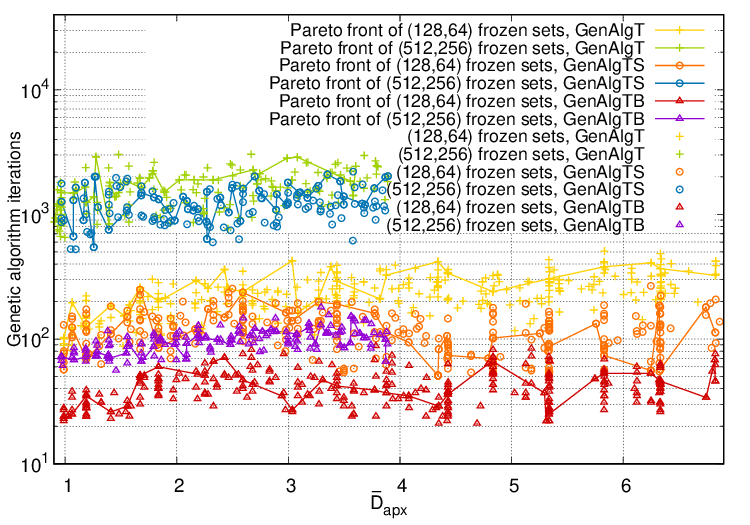}
    \caption{The number of iterations in GenAlgT, GenAlgTS and GenAlgTB}
    \label{fig:IterTotal_Dapprox}
\end{figure}
According to Fig. \ref{fig:IterTotal_Dapprox}, GenAlgT, GenAlgTS and GenAlgTB perform $239$, $117$ and $42$ iterations on average for $(128,64)$, respectively. GenAlgT, GenAlgTS and GenAlgTB perform $1617$, $1220$ and $98$ iterations on average for $(512,256)$, respectively. Note that GenAlgT and GenAlgTS terminate if no improvement has been observed for the last $50$ and $200$ iterations for the parameters $(128,64)$ and $(512,256)$, respectively. In GenAlgTB, the numbers of such last iterations are $20$ and $30$ for $(128,64)$ and $(512,256)$, respectively. So, GenAlgT, GenAlgTS and GenAlgTB found the resulting frozen sets in $189$, $67$ and $22$ iterations on average for $(128,64)$, respectively. GenAlgT, GenAlgTS and GenAlgTB found the resulting frozen sets in $1417$, $1020$ and $68$ iterations on average for $(512,256)$, respectively. Thus, GenAlgTB requires much less iterations than GenAlgT.   

The execution of GenAlgTB required $0.2$ and $10$ seconds on average for the parameters $(128,64)$ and $(512,256)$, respectively, whereas the resulting frozen sets were found after $0.1$ and $6$ seconds on average for $(128,64)$ and $(512,256)$, respectively. The implementation is non-parallel and executed on a computer with i7 3.2GHz processor. Note that the complexity is independent of the design $E_b/N_0$, since the code performance is evaluated via theoretical bounds.

\begin{figure}[htp]
    \centering
     \begin{subfigure}{0.47\textwidth}
         \centering
         \includegraphics[width=\textwidth]{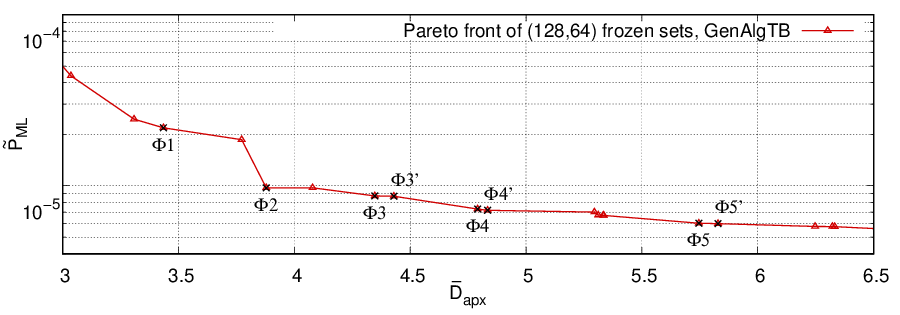}
     \end{subfigure}
     \hfill
     \begin{subfigure}{0.47\textwidth}
         \centering
         \includegraphics[width=\textwidth]{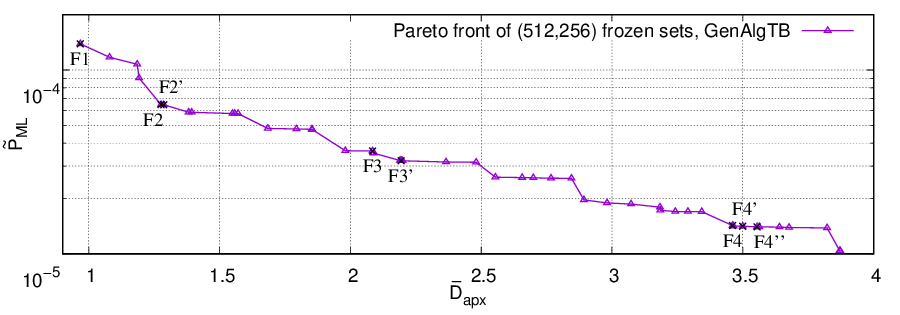}
     \end{subfigure}
     \hfill
     \begin{subfigure}{0.47\textwidth}
         \centering
         \includegraphics[width=\textwidth]{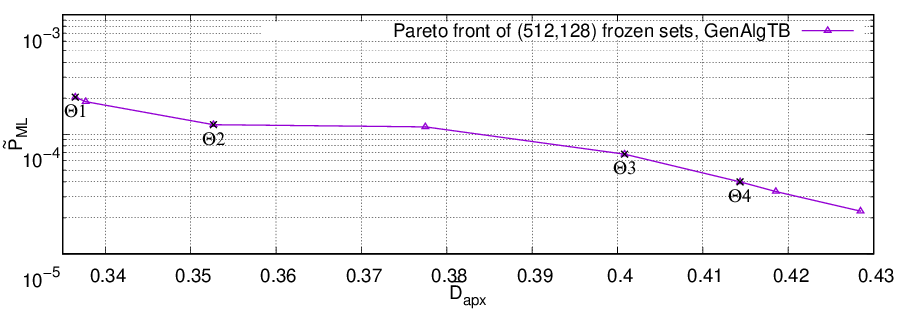}
         \end{subfigure}
    \caption{{Frozen sets at the Pareto front of GenAlgTB are indicated by $\triangle$. Those of them evaluated in Section \ref{sFER} are indicated by $\times$ and the corresponding labels. 
    }}
    \label{fig:FrSetLabels}
\end{figure}

    {Since GenAlgTB has lower complexity than GenAlgT and GenAlgTS, we further consider only frozen sets generated by GenAlgTB. For clarity, the Pareto fronts of GenAlgTB are shown separately for $(128,64)$, $(512,256)$ and $(512,128)$ frozen sets in Fig. \ref{fig:FrSetLabels}. Note that the Pareto fronts of GenAlgTB in Fig. \ref{fig:Pareto_Dapprox} are the same as in Fig. \ref{fig:FrSetLabels}. 
} 

\subsection{Performance of Precoded Polar Codes}
\label{sFER}

In Section \ref{sGeneratedFrSets}, we evaluated the proposed frozen set design. {The frozen sets from the Pareto front of GenAlgTB, shown in Fig. \ref{fig:FrSetLabels},} are further integrated with the frozen bit expressions from Section \ref{sFrBitExpressions} to yield precoded polar codes. 
\begin{figure}
    \centering    \includegraphics[width=0.47\textwidth]{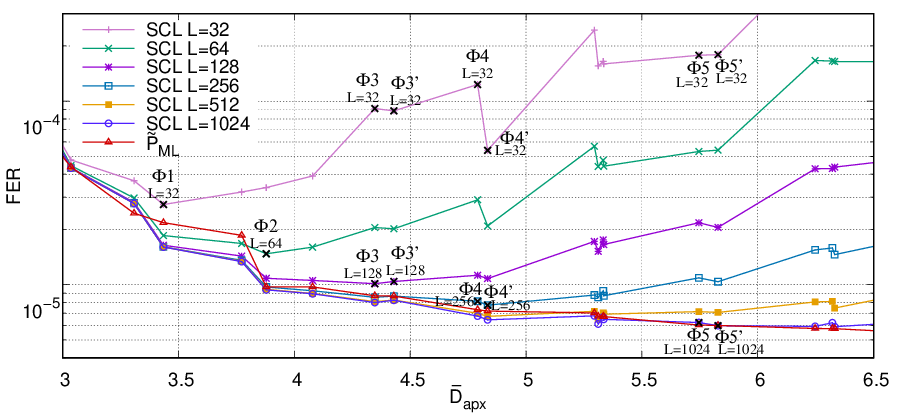} 
    \caption{{The FER performance of the $(128,64)$ Pareto front from Fig. \ref{fig:FrSetLabels} under SCL decoding} 
    }  \label{fig:CodeSelection_n128_k64}
\end{figure}
{
Fig. \ref{fig:CodeSelection_n128_k64} illustrates the FER performance of the precoded polar codes with all $(128,64)$ frozen sets from Fig. \ref{fig:FrSetLabels} under the SCL decoder \cite{tal2015list} with $L\in\{32,64,128,256,512,1024\}$ at $E_b/N_0=3.5$ dB.
It can be seen that for each $L$, the SCL FER generally decreases until reaching
the global minimum 
and then generally increases. The SCL FER  minimum 
corresponds to the frozen set(s) chosen for the list size $L$, e.g., $\Phi 1$ for $L=32$,  $\Phi 2$ for $L=64$, $\Phi 3$ and $\Phi 3'$ for $L=128$, $\Phi 4$ and $\Phi 4'$ for $L\in\{256,512\}$, and $\Phi 5$ and $\Phi 5'$ for $L=1024$. Fig. \ref{fig:CodeSelection_n128_k64} also shows the  $\widetilde{P}_{\mathrm{ML}}$ curve from Fig. \ref{fig:FrSetLabels}, which monotonically decreases in accordance with the Pareto front definition. Since the SCL decoding approaches the ML decoding when $L$ is large enough for a given code, the SCL FER decreases almost monotonically when it is close to the $\widetilde{P}_{\mathrm{ML}}$ curve. 
It can be seen that the frozen sets $\Phi3$ and $\Phi3'$ perform similarly for all considered $L$. The same property holds for $\Phi5$ and $\Phi5'$. Although $\Phi4$ and $\Phi4'$ perform similarly under SCL with $L\in\{128,256,512,1024\}$, the performance of $\Phi4$ for $L=32$ is substantially better than that of $\Phi4'$, $\Phi3$, $\Phi3'$, $\Phi5$ and $\Phi5'$. The outstanding performance of $\Phi4'$ under SCL with low $L$ can be explained by its low SC decoding error probability. Specifically, $\widetilde{P}_{\mathrm{SC}}(\Phi4')=0.069$ 
is noticeably lower than
$\widetilde{P}_{\mathrm{SC}}(\Phi3)=0.106$, 
$\widetilde{P}_{\mathrm{SC}}(\Phi3')=0.266$, 
$\widetilde{P}_{\mathrm{SC}}(\Phi4)=0.113$, 
$\widetilde{P}_{\mathrm{SC}}(\Phi5)=0.114$ 
and $\widetilde{P}_{\mathrm{SC}}(\Phi5')=0.272$ at $E_b/N_0=3.5$ dB, 
where $\widetilde{P}_{\mathrm{SC}}$ is the analytical approximation
of the SC decoding error probability \cite[Eq. (3)]{PeSC2014}. 
Thus, if we would need to select a single frozen set for $L\in\{32,64,128,256,512,1024\}$, then we would choose $\Phi4'$ based on its 
 $\widetilde{P}_{\mathrm{ML}}$, $\bar D_{\mathrm{apx}}$ and $\widetilde{P}_{\mathrm{SC}}$. 
 However, this paper focuses on designing frozen sets for each $L$ individually.}   
 
 {Note that we have shown the FER performance of $(128,64)$ codes with all frozen sets from the Pareto front of GenAlgTB under SCL with all $L\in\{32,64,128,256,512,1024\}$ in Fig. \ref{fig:CodeSelection_n128_k64} only to illustrate the behaviour of the Pareto front depending on 
the decoding list size $L$. 
A proper frozen set for a given target $L$ can be found
by running decoding simulations only for a small number of frozen sets from the Pareto front. 
Specifically, the knowledge that the SCL FER decreases almost monotonically before reaching the global minimum and is almost convex in the vicinity of the global minimum 
can be employed to find the SCL FER global minimum with a small number of considered frozen sets. This can be done by first finding an approximate global minimum by considering frozen sets from the Pareto front with a large step in $\widetilde{P}_{\mathrm{ML}}$ (alternatively, a large step in $\bar D_{\mathrm{apx}}$) and then reducing the step size to find the minimum accurately. To reduce the computational complexity of the search for an approximate minimum, we recommend using rough FER estimates obtained by running the decoder only until a few dozens of errors. The subsequent search in the vicinity of the approximate minimum requires accurate FER estimates, for example, we run decoding simulations until 1000 errors for the code length $128$ and until 400 errors for the code length $512$.
If a frozen set for a certain $L$ is already found, then the frozen sets having higher $\widetilde{P}_{\mathrm{ML}}$ (alternatively, lower $\bar D_{\mathrm{apx}}$)  
can be eliminated from the consideration when searching for frozen sets for a larger target $L$, and similarly, the frozen sets having lower  $\widetilde{P}_{\mathrm{ML}}$ (alternatively, higher $\bar D_{\mathrm{apx}}$) can be eliminated in the case of a smaller target $L$.  
Note that the candidate frozen sets can be generated one by one using GenAlgTB with the input parameter $T_D$ slightly exceeding the desirable $\bar D_{\mathrm{apx}}$, instead of generating the whole Pareto front beforehand. 
}

In what follows, we compare the FER performances of the proposed codes and the state-of-the-art codes. 
The codes are labelled as follows:
\begin{small}
\begin{singlespace}
\begin{itemize}
\item {\textbf{Proposed $\pmb{\Phi}$\_}, \textbf{Proposed F\_} and \textbf{Proposed $\pmb{\Theta}$\_}}\footnote{{The proposed precoded polar codes, i.e., the proposed frozen sets and generator matrices, 
are available at {https://sites.google.com/site/veradmiloslavskaya/specifications-of-error-correcting-codes}.}} -- precoded polar codes with the proposed frozen sets from {Fig. \ref{fig:FrSetLabels}} 
and frozen bit expressions from Section \ref{sFrBitExpressions}. 
\item \textbf{5G polar CRC-11} -- 5G polar codes with CRC-11 \cite{polarNR2018}.
\item \textbf{eBCH subcode d=\_} -- eBCH polar subcodes \cite{trifonov2016subcodes} with the minimum distance \textbf{d}.
\item \textbf{Code-0}, \textbf{Code-1} and \textbf{Code-2} -- $(128,64)$ 
code from \cite[Fig. 2]{Coskun2022InfTheor}, $(512,256)$ Code-1 and Code-2 from \cite[Figs. 4 and 6]{Coskun2022InfTheor}, respectively. 
\item \textbf{PAC-RM} -- $(128,64)$ PAC code with the Reed-Muller frozen set \cite{Arkan2019FromSD}. 
\item \textbf{Systematic PAC} -- $(128,64)$ systematic PAC code 
generated by the genetic algorithm in which the minimum-weight codewords are computed at each iteration \cite[Fig. 5a]{Tonnellier2021SystPAC}. 
\item \textbf{RecursCode} -- $(128,64)$ precoded polar code, which is obtained by recursively optimizing the weight distribution of a subcode of the Plotkin sum of shorter codes \cite[Fig. 4d]{milos2021recursive}.
\item {\textbf{GNN IMP} -- $(128,64)$ code generated by the heterogeneous graph-neural network (GNN) based iterative message-passing (IMP) algorithm \cite[Fig. 10]{Liao2023GNN}.} 
\item \textbf{Number-polar s=(\_,\_,\_) GA} -- $(512,256)$ precoded polar codes specified by the triplets \textbf{s} and the Gaussian approximation-based reliability sequence \cite[Fig. 7a]{milos2024Reinf}.
\item {\textbf{PAC P=10} and \textbf{PAC P=12} -- PAC codes, whose frozen sets are generated by a greedy approach using the ML performance bound and integer parameter \textbf{P}
\cite[Figs. 3 and 5]{Chiu2023DesignP}.}
\item {\textbf{PAC+} -- $(512,128)$ PAC code, whose frozen set design emerged from the theoretical analysis of the minimum weight codewords 
\cite[Fig. 2]{Rowshan2022Impr}.}
\item \textbf{Normal approximation bound} 
\cite{Erseghe2016}. 
\end{itemize}
\end{singlespace}
\end{small}


We use the well-known SCL decoder \cite{tal2015list} for moderate list sizes $L$. When $L$ is large, we employ the sequential (SQ) decoder \cite{miloslavskaya2014sequential, Trifonov2018ASF}. Note that SQ is a variation of SCL with a similar FER performance and time complexity approaching $O(N \log(N))$ in the high-SNR region \cite{Trifonov2018ASF}, while the time complexity of SCL scales as $O(L N \log(N))$.


\begin{figure}
    \centering
    \includegraphics[width=0.47\textwidth]{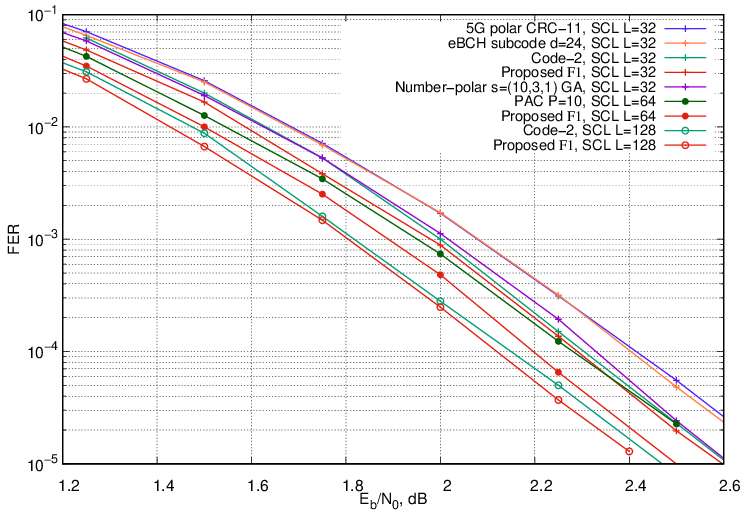} 
    \caption{{The performance comparison of  $(512,256)$ codes under SCL decoding with $L\in\{ 32,64,128\}$} 
    }
    \label{fig:FER_n512_k256_uptoL128}
\end{figure}
\begin{figure}
    \centering
    \includegraphics[width=0.47\textwidth]{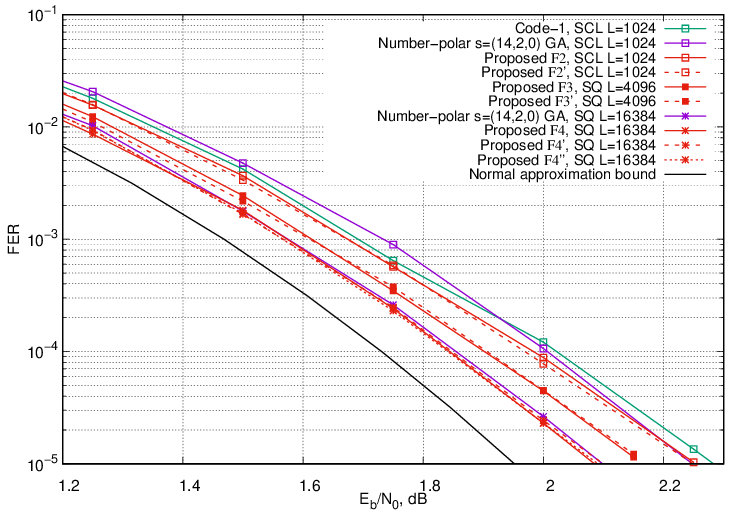} 
    \caption{{The performance comparison of  $(512,256)$ codes under SCL with $L=1024$ and SQ with $L\in\{4096,16384\}$} 
    }
 \label{fig:FER_n512_k256_fromL1024}
\end{figure}

\begin{figure}
    \centering
    \includegraphics[width=0.47\textwidth]{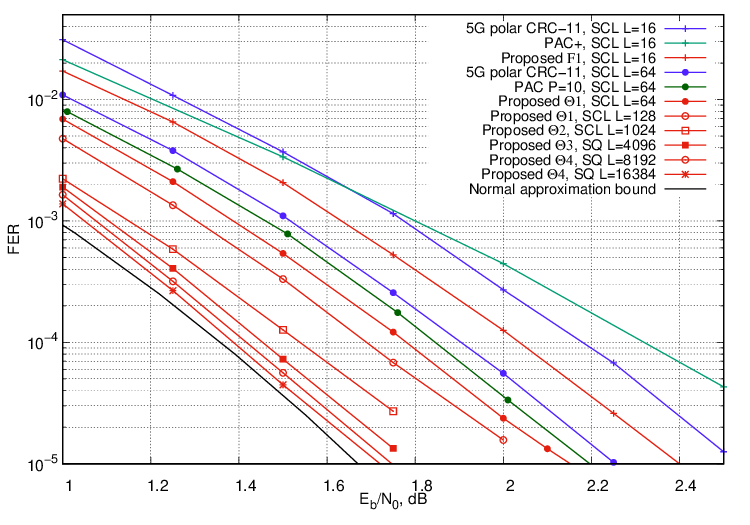} 
    \caption{{The performance comparison of $(512,128)$ codes under SCL with $L\in\{ 16,64,128,1024\}$ and SQ with $L\in\{4096,8192,16384\}$}} 
    \label{fig:FER_n512_k128}
\end{figure}

\begin{figure} 
    \centering \includegraphics[width=0.47\textwidth]{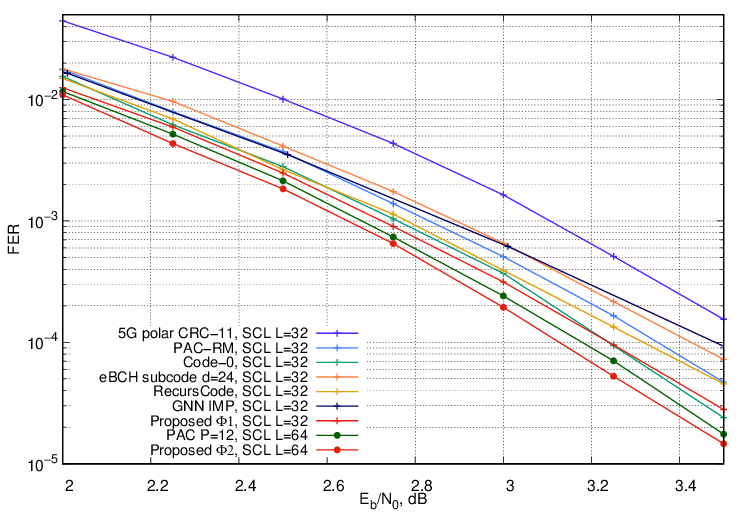} 
    \caption{{The performance comparison of  $(128,64)$ codes under SCL decoding with $L\in\{32,64\}$}}
    \label{fig:FER_n128_k64_L32_L64}
\end{figure}

\begin{figure} 
    \centering \includegraphics[width=0.47\textwidth]{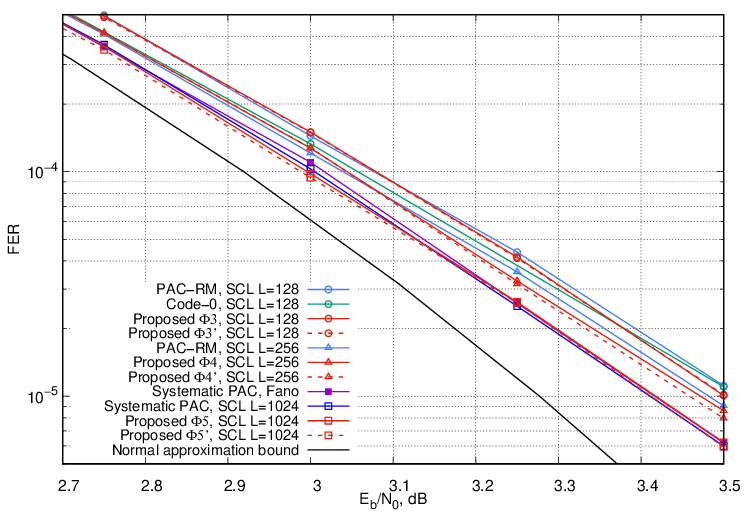} 
    \caption{{The performance comparison of  $(128,64)$ codes under SCL decoding with $L\in\{ 128,256,1024\}$}}
    \label{fig:FER_n128_k64_fromL128}
\end{figure}


{Figs. \ref{fig:FER_n512_k256_uptoL128}--\ref{fig:FER_n512_k128} show the FER performance of  precoded polar codes of length $512$ under SCL/SQ decoding. The values of list size  $L$ for SCL/SQ decoding are chosen to enable a comparison with the published results: list sizes $L\in\{32,64,128,1024,4096,16384\}$ for $(512,256)$ codes and $L\in\{16,64,128,1024,4096,8192,16384\}$ for $(512,128)$ codes.} 
It can be seen that the proposed codes outperform the state-of-the-art codes for all considered $L$. 
In {Figs. \ref{fig:FER_n128_k64_L32_L64} and \ref{fig:FER_n128_k64_fromL128}}, the proposed $(128,64)$ codes exhibit similar performance to the state-of-the-art codes under SCL {decoding} with the list sizes {$L\in\{32,64,128,256,1024\}$}. 
This is because (i) the state-of-the-art $(128,64)$ frozen sets are near-optimal due to a moderate {number of reasonable solutions of the frozen set design problem} for short code lengths, 
and (ii) in this paper, we have proposed a new low-complexity frozen set design with the optimization criteria derived for average frozen bit expressions. 
The problem of the frozen bit expression optimization for a given frozen set is left for future work. 

{It is worth noting that we provide simulation results for several 
frozen sets with similar pairs $(\bar D_{\mathrm{apx}},\widetilde{P}_\mathrm{ML})$ from Fig. \ref{fig:FrSetLabels}. These frozen sets are labelled similarly, e.g., $\Phi3$ and $\Phi3$' in Fig. \ref{fig:FrSetLabels}.
Figs. \ref{fig:FER_n512_k256_uptoL128}--\ref{fig:FER_n128_k64_fromL128} show that the corresponding proposed codes, indicated by the solid and dashed red curves, perform similarly in the case of decoding with a target $L$.
} 

Besides the excellent FER performance, the advantages of the proposed frozen set design over the main competitors \cite{Coskun2022InfTheor}, \cite{milos2024Reinf}, \cite{Arkan2019FromSD} and \cite{Tonnellier2021SystPAC} are as follows. Our proposed design method is fully specified, providing a clear frozen set design procedure. In contrast, \cite{Coskun2022InfTheor} offers
four exemplary frozen sets but lacks a general frozen set design procedure. 
We use deterministic frozen bit expressions, specified in Section \ref{sFrBitExpressions}, whereas \cite{Coskun2022InfTheor} uses randomized frozen bit expressions. 
The proposed B-constrained and S-constrained frozen set structures are more flexible than the triplet-tuned frozen sets from \cite{milos2024Reinf}. This flexibility offers additional opportunities for optimization at the expense of the increased number of evaluated frozen sets. Fast frozen set evaluation is enabled by the use of theoretical bounds instead of the decoding-based frozen set evaluation \cite{milos2024Reinf}. Thus, the computational complexity of the proposed frozen set design is low as shown in Section \ref{sGeneratedFrSets}.
Note that \cite{Arkan2019FromSD} suggested only a single $(128,64)$ PAC-RM code. Although the problem of designing PAC codes with arbitrary parameters has been solved in \cite{Tonnellier2021SystPAC} by using a genetic algorithm, 
it involves the weight distribution computation via decoding at each iteration of the genetic algorithm, leading to a large design complexity. Since the weight distribution is used as the optimization objective in \cite{Tonnellier2021SystPAC}, the corresponding codes perform well only under Fano decoding or SCL with huge $L$. In contrast, the proposed low-complexity frozen set design method is suitable for various $L$ and various code parameters.


\section{Conclusion}

In this paper, we proposed a new low-complexity frozen set design for precoded polar codes with near-uniformly distributed frozen bit expressions. The frozen set design criteria are given by analytical bounds on the FER performance and SCL complexity, where the proposed SCL complexity criterion is based on the recently published complexity analysis of SCL with near ML performance.  
These criteria define a frozen set optimization problem, whose solutions can be efficiently found by the genetic algorithm. 
To reduce the {optimization complexity}, we imposed constraints on the frozen set structure such that the number of the genetic algorithm iterations has been reduced by $5$ and $17$ times for the code parameters $(128,64)$ and $(512,256)$, respectively. The constructed
precoded polar codes of length $512$ have a superior FER performance compared to the state-of-the-art codes under SCL-based decoding with various list sizes.

\section*{Acknowledgment}
The authors would like to thank Dr. Thibaud Tonnellier for providing the frozen sets of systematic PAC codes \cite{Tonnellier2021SystPAC} and Dr. Mustafa Cemil Coşkun for clarifying the bit-channel entropy computation \cite{Coskun2022InfTheor}.


\bibliographystyle{IEEEtranTCOM}

\end{document}